\documentclass[a4paper,USenglish,cleveref,autoref,thm-restate,pdfa,dvipsnames]{lipics-v2021}
\hideLIPIcs\def\subjclassHeading{}\ccsdesc{}\global\renewcommand\ccsdesc[2][100]{}
\nolinenumbers %uncomment to disable line numbering 
\usepackage{tcolorbox}
\usepackage{xspace}
\usepackage{cite}

\title{Modelling Network Resilience:\\ The Complexity of Some Graph Division Games}
\titlerunning{The Complexity of Some Graph Division Games}

\author{Grzegorz Gutowski}
{Institute~of~Theoretical~Computer~Science, Faculty~of~Mathematics~and~Computer~Science, Jagiellonian~University, Krak{\'o}w, Poland \and \url{https://grzegorz.gutowscy.pl}}
{grzegorz.gutowski@uj.edu.pl}
{https://orcid.org/0000-0003-3313-1237}
{Partially supported by grant no.~2023/49/B/ST6/01738 from National Science Centre, Poland.}

\author{Konstanty Junosza-Szaniawski}{Warsaw University of Technology, Warsaw, Poland}{}{https://orcid.org/0000-0003-0352-8583}{}
\author{Antonio Lauerbach}{Universit\"at W\"urzburg, Germany \and \url{https://go.uniwue.de/lauerbach-antonio}}{}{https://orcid.org/0009-0007-9093-3443}{}
\author{Alexander~Wolff}{Universit\"at W\"urzburg, Germany \and \url{https://www.informatik.uni-wuerzburg.de/en/algo/team/wolff-alexander}}{}{https://orcid.org/0000-0001-5872-718X}{}

\authorrunning{G. Gutowski, K. Junosza-Szaniawski, A. Lauerbach, and A. Wolff}
\Copyright{Grzegorz Gutowski, Konstanty Junosza-Szaniawski, Antonio Lauerbach, Alexander Wolff}

\ccsdesc[500]{Mathematics of computing $\rightarrow$ Discrete mathematics}
\keywords{}

\acknowledgements{}

\usepackage{tikz}
\usetikzlibrary{arrows,math,patterns,shapes,fit,decorations.pathreplacing,shapes.geometric}
\usepackage[disable]{todonotes}
\usepackage{nicefrac}

\definecolor{defblue}{rgb}{0.121,0.47,0.705}
\definecolor{linkblue}{rgb}{0.098,0.098,0.4392}
\DeclareTextFontCommand{\emphbText}{\color{defblue}\em}
\NewDocumentCommand{\emphb}{m}{\ifmmode\mathcolor{defblue}{#1}\else\emphbText{#1}\fi}

\usepackage[algoruled,vlined,longend]{algorithm2e}
\SetKw{KwAnd}{and}
\SetKw{KwOr}{or}
\DontPrintSemicolon

\let\leq\leqslant
\let\geq\geqslant
\let\le\leqslant
\let\ge\geqslant

\let\rho\varrho

\newcommand{\brac}[1]{{\left(#1\right)}}

\newcommand{\set}[1]{\left\{#1\right\}}

\newcommand{\norm}[1]{{\left|#1\right|}}

\newcommand{\ceil}[1]{{\left\lceil #1 \right\rceil}}

\newcommand{\Oh}[1]{O\brac{#1}}

\newcommand{\Real}{\mathbb{R}}

\newcommand{\I}{\ensuremath{\mathcal{I}}\xspace}

\newcommand{\none}{\mathrm{none}}
\newcommand{\rightend}{\mathrm{right}}
\newcommand{\leftend}{\mathrm{left}}

\newcommand{\tw}{\ensuremath{\mathsf{tw}}\xspace}

\newcommand{\defproblem}[4]{
  \begin{tcolorbox}%
    \nolinenumbers\vspace*{-1ex}\hspace*{-2ex}
    \begin{minipage}{0.98\textwidth}
      \begin{tabular}{@{}>{\normalsize}l@{~~}>{\normalsize}p{0.92\textwidth}@{}}
        {\sf\bfseries\color{lipicsGray} Problem:} & #1\\[.1ex]
        {\sf\bfseries\color{lipicsGray} Input:} & #2\\[.1ex]
        {\sf\bfseries\color{lipicsGray} #4:} & #3
      \end{tabular}
    \end{minipage}\vspace*{-1ex}
  \end{tcolorbox}
}
\newcommand{\defdecproblem}[3]{\defproblem{#1}{#2}{#3}{Question}}

\newcommand{\POL}{\ensuremath{\mathsf{P}}\xspace}
\newcommand{\NP}{\ensuremath{\mathsf{NP}}\xspace}

\newcommand{\PHS}[1]{\ensuremath{\Sigma^\mathsf{P}_{#1}}\xspace}

\newcommand{\SPTWO}{\PHS{2}}

\newcommand{\PSPACE}{\ensuremath{\mathsf{PSPACE}}\xspace}

\newcommand{\EXPTIME}{\ensuremath{\mathsf{EXPTIME}}\xspace}

\newcommand{\PSPTWOSAT}{\textsc{Existential-}2\textsc{-Level-SAT}\xspace}

\DeclareMathOperator{\payoff}{\mathcal{U}_\mathrm{def}}
\DeclareMathOperator{\payoffatt}{\mathcal{U}_\mathrm{att}}
\newcommand{\Ppuredefpureatt}{\textsc{CP(Pure Defense, Any Pure Attack)}\xspace}
\newcommand{\Ppuredeffixatt}{\textsc{CP(Fixed Attack, Pure Defense)}\xspace}
\newcommand{\Ppuredefmixatt}{\textsc{CP(Fixed Mixed Attack, Pure Defense)}\xspace}

\newcommand{\Ppureattpuredef}{\textsc{CP(Pure Attack, Any Pure Defense)}\xspace}
\newcommand{\Ppureattfixdef}{\textsc{CP(Fixed Defense, Pure Attack)}\xspace}
\newcommand{\Ppureattmixdef}{\textsc{CP(Fixed Mixed Defense, Pure Attack)}\xspace}

\newcommand{\Pmixdefmixatt}{\textsc{CP(Mixed Defense, Mixed Attack)}\xspace}

\newcommand{\Pclique}{\textsc{Clique}\xspace}
\newcommand{\Psetcover}{\textsc{SetCover}\xspace}
\newcommand{\Pcliquenodedel}{\textsc{CliqueNodeDeletion}\xspace}
\newcommand{\Pbalancedvertexseparator}{\textsc{BalancedVertexSeparator}\xspace}

\newif\ifinappendix% Default is \inappendixfalse
\let\oldappendix\appendix% Store \appendix
\renewcommand{\appendix}{% Update \appendix
  \oldappendix% Default \appendix
  \inappendixtrue% Set switch to true
}
\newcommand{\restateref}[1]{\ifinappendix{\hyperref[#1]{$\star$}}\else{\hyperref[#1*]{$\star$}}\fi}

\begin{document}

\maketitle

\keywords{Attackers and defenders, graph partition, cake cutting}

\begin{abstract}
Motivated by the controller placement problems in software-defined networks and the fair division principles of classical ``cake cutting'', we investigate the following two-player zero-sum game.
In our model, a defender places a limited number of controllers on graph vertices, while an attacker deletes a limited number of vertices.
The defender score is the total number of surviving vertices reachable from any remaining controller.
We formalize the computational problems associated with various game dynamics (defender plays first; attacker plays first; players play simultaneously; pure or mixed strategies).

We show that these natural problems are \NP-complete or \SPTWO-complete, depending on the specific variant.
These hardness results provide limitations for optimal controller placement algorithms under different notions of quality of a solution.
Finally, we present structural insights that yield efficient
algorithms for restricted graph classes (namely interval graphs and
graphs of bounded treewidth).
\end{abstract}

\keywords{Controller placement, fair division, cake cutting,
  attacker--defender game}

\section{Introduction}

Cake cutting is a classical problem in fair division~\cite{Steinhaus1948,DubinsSpanier1961,BramsTaylor1996,RobertsonWebb1998,Procaccia2016}. Its goal is to divide a heterogeneous divisible resource, called a cake, among several agents with different preferences. In the standard model, the cake is represented by the interval $[0,1]$, and each agent has a valuation function over pieces of the cake. For two agents, a well-known procedure is the \emph{I~cut, you choose} rule: one agent cuts the cake into two pieces, and the other chooses one of them. For three or more agents, however, the problem becomes significantly more difficult. In particular, Stromquist~\cite{Stromquist} showed that no finite protocol can guarantee an envy-free division when each agent is required to receive a single connected piece of the interval.

A natural extension of the model is to replace the interval by a graph. In this setting, the resource is distributed over the graph, and each agent is typically required to receive a connected piece. This captures situations where the resource has an underlying network structure and disconnected allocations are not allowed. Unlike the interval case, the graph setting introduces additional difficulties because the topology of the graph limits the possible allocations. Therefore, fairness and computational complexity both become closely tied to the structure of the graph and the agents' valuations~\cite{BouveretEtAl2017,BeiEtAl2022}.

This perspective becomes even more natural when one views division as
an interactive process rather than a static outcome. Already in
classical cake cutting, procedures such as \emph{I~cut, you choose}
involve sequential decisions made by the agents.
On graphs, similar ideas lead to strategic settings in which players
with conflicting objectives can influence the resulting division. The
problem then is no longer only to determine whether a fair or good
division exists, but also to understand which player can enforce a
desirable outcome.

Once division on graphs is viewed as a strategic process, it becomes natural to study its algorithmic complexity from a game-theoretic perspective. The main question is whether one of the players can guarantee a desired outcome against all possible responses of the opponent. Such problems are often computationally hard, because evaluating a move requires reasoning about the entire game tree. Classical examples include the \PSPACE-completeness of \emph{Generalized Geography}~\cite{Schaefer1978} and \emph{Hex}~\cite{Reisch1981}, and the \EXPTIME-completeness of generalized \emph{Chess} and \emph{Checkers}~\cite{FraenkelLichtenstein1981,Robson1984}.

In this paper, we study a game introduced by Fortz, Mycek, Pi{\'o}ro, and Tomaszewski~\cite{TomaszewskiPM2022,fortz}, motivated by the problem of placing controllers in a telecommunication network so as to improve its resilience to attacks. A network operator places controllers at selected vertices of the network, while an attacker chooses a set of vertices to remove and disconnect the network. A vertex is said to survive if it belongs to a connected component containing at least one controller. The outcome of the game is the number of surviving vertices. This game can be viewed as a variant of the classical \emph{I cut, you choose} procedure, with the crucial difference that the chooser must act before knowing where the cuts will occur.

From a game-theoretic perspective, a \emphb{pure strategy} of the
defender is a choice of controller locations, while a pure strategy of
the attacker is a choice of attacked vertices.  The payoff is
determined by the number of surviving vertices for the pair of
strategies chosen by the players.  A \emphb{mixed strategy} is a
probability distribution over pure strategies, meaning that a player
randomizes between several possible choices.  Thus, if the complete
strategy sets of both players are given explicitly as part of the
input, then the game can be represented by a payoff matrix and its
value can be computed in polynomial time.  In general, however, the
number of possible strategies of each player may be exponential in the
size of the graph.  To cope with this difficulty, one may apply the
double-oracle method.  Starting from restricted sets of strategies for
both players, one repeatedly computes a best response to the current
strategy set of the opponent and adds this best response to the
corresponding set.  The process continues until no player can improve
by adding a new strategy.  In practice, best-response problems are
often solved using mixed-integer linear programming, and the overall
approach is closely related to the column-generation
method~\cite{McMahan2003536}.  The game was originally studied as a
min--max game with pure strategies, while Junosza-Szaniawski and
Nogalski~\cite{JunoszaN2023} also considered mixed strategies for
restricted strategy sets.  This approach was extended by applying the
double-oracle method to the mixed-strategy version of the
game~\cite{PioroMTJN2023}.

The double-oracle method is frequently used in robust optimization, where one seeks solutions that perform well under uncertainty in the input data~\cite{ben2009robust,bertsimas2011theory}. In this sense, the game studied here can be interpreted as an optimal placement problem under uncertainty regarding the set of vertices attacked by the adversary.

\subparagraph{Our contribution}

The main contribution of this paper is a complexity classification of
several natural variants of the controller-placement game on graphs.
We show that the defender's optimization problem against a fixed
attack is solvable in linear time (\cref{obs:optdef}), whereas the
attacker's optimization problem against a fixed defense is \NP-complete
(\cref{thm:pureattfixdef}).  For the sequential game, we prove that
deciding whether the defender has a strategy that guarantees a
prescribed value is \SPTWO-complete (\cref{thm:puredefpureatt}), whereas
the analogous problem for the attacker is \NP-complete
(\cref{thm:pureattpuredef}).  We further study mixed-strategy settings
arising from double-oracle approaches and show that the associated
best-response problems are \NP-complete; see \cref{sec:mixed}.  Finally,
we provide structural results that yield efficient
algorithms on interval graphs and graphs of bounded treewidth; see
\cref{sec:restricted}.

\subparagraph{Related work}

Fair division of indivisible goods under connectivity constraints has
been studied intensively in recent years;
surveys~\cite{Suksompong2021,BiswasEtAl2023} are available.  In this
setting, goods are represented as vertices of a graph and each agent
must receive a connected bundle.  Still in this setting, the
\emphb{maximin share} has been studied~\cite{BouveretEtAl2017}, that
is, the value an agent can guarantee by acting as the cutter in an
\emph{I cut, you choose} procedure.  It was
shown~\cite{BouveretEtAl2017} that maximin share allocations always
exist on trees, but may fail to exist already on cycles.  Subsequent
work established approximation guarantees for several graph classes,
including cycles~\cite{LoncTruszczynski2020},
$d$\nobreakdash-claw-free graphs~\cite{Lonc2023,Lonc26}, block graphs,
cacti, complete multipartite graphs, and split graphs \cite{Lonc25}.
Others introduced structural notions such as the price of
connectivity~\cite{BeiEtAl2022}.  Beyond maximin share, the literature
also considers other fairness and efficiency notions, including
envy-freeness and its relaxations~\cite{BeiEtAl2022,BiloEtAl2022},
chore division~\cite{BouveretCechlarovaLesca2019,XiaoEtAl2023},
Pareto-optimal connected allocations~\cite{IgarashiPeters2019}, local
fairness~\cite{HummelIgarashi2024}, and computational complexity
questions for graph-based fair
division~\cite{GrecoScarcello2020,DeligkasEtAl2021}.

Many two-player graph games can be expressed as existential--universal
decision problems: one asks whether there exists a move or strategy
for one player such that every response of the opponent still leads to
the desired outcome.  This alternating-quantifier structure is a
common source of $\Sigma_2^P$-hardness (see \cref{sec:preliminaries}).
Several related graph problems have
this form.  Bliem and Woltran~\cite{BliemW18} showed that
\textsc{Secure Set} and several of its variants are
$\Sigma_2^P$-complete.  Cardinal and Joret~\cite{Cardinal} proved that
deciding whether a bipartite graph has a set of at most $k$ vertices
hitting all maximal independent sets is $\Sigma_2^P$-complete.
Marx~\cite{MARX} established that $k$-\textsc{Clique-Coloring} is
$\Sigma_2^P$-complete for every $k \ge 2$.  A particularly close
example is the recent work of Chaplick, Gutowski, and
Krawczyk~\cite{ChaplickEtAl25NoteComplexityDefensiveDomination}, who
proved $\Sigma_2^P$-completeness for \textsc{DefensiveDominatingSet}.
More recently, Gr\"une and Wulf~\cite{GRUNE} developed a general
framework for blocking all optimal solutions and derived
$\Sigma_2^P$-completeness for many interdiction problems, including
clique, independent set, and Hamiltonian path/cycle interdiction.

Our game can also be viewed through an existential--universal lens, but it is strongly motivated by controller placement problems in software-defined networks. This literature considers a broad range of objectives, including communication delay~\cite{DouQYG23,WangZHW18}, controller capacity~\cite{IbrahimHNSNF21,YaoBLG14}, dynamic traffic and topology changes~\cite{BariCZZAB13,ToufgaAAOV20,HeVK19,DixitHMLK13}, and resilience to failures and attacks~\cite{WangChen21,tohidi,kumar,Sebopelo_Isong_2024,SantosGT21}; see~\cite{DasSG2020,SinghS2018survey} for surveys. 

A similar existential--universal structure appears in robust optimization: one asks whether there exists a solution that remains good for all realizations of the uncertain data from a given uncertainty set. Robust optimization is a central framework for optimization under uncertainty~\cite{ben2009robust,bertsimas2011theory}, and in combinatorial network problems it often takes an adversarial form, where the uncertainty is interpreted as an opponent selecting the most damaging scenario~\cite{BertsimasS03}. Because this combines combinatorial structure with worst-case analysis over many possible scenarios, the resulting models are usually significantly harder than their nominal versions and often require iterative solution methods such as double oracle.

\section{Preliminaries}
\label{sec:preliminaries}

For a graph~$G$, let $V(G)$ denote the set of vertices of~$G$ and let
$E(G) \subseteq {V(G) \choose 2}$ denote the set of edges of~$G$.
In a graph, a \emphb{($k$-)defense} is a set of (at most~$k$) vertices,
and an \emphb{($\ell$-)attack} is a set of (at most~$\ell$) vertices.
Let $\mathcal{D}_k(G)=\{D\subseteq V(G): |D|=k\}$ and $\mathcal{A}_\ell(G)=\{A\subseteq V(G): |A|=\ell\}$ denote the sets of $k$-defenses and $\ell$-attacks. Given a graph~$G$ with a defense~$D$ and an attack~$A$, we say that a
vertex $v \in V(G) \setminus A$ \emphb{survives} if the graph $G - A$
contains a path from $v$ to some vertex in~$D$.
If a vertex does not survive under defense~$D$ and attack~$A$, we say
that it gets \emphb{disabled}.
We define the payoff function $\payoff(G,D,A)$ for
the defender to be the number of vertices that survive under
defense~$D$ and attack~$A$.
Correspondingly, we define the payoff
function $\payoffatt(G,D,A)$ for the attacker to be the number of
vertices that get disabled. 
Clearly, we have $\payoff(G,D,A) + \payoffatt(G,D,A) = \norm{V(G)}$,
so the game is \emphb{zero-sum}.  The defender seeks to
maximize~$\payoff$, while the attacker seeks to maximize~$\payoffatt$
so to minimize~$\payoff$.  Moreover, every vertex in~$A$ is disabled
by the attack, hence $\payoffatt(G,D,A) \ge \norm{A}$.

A \emphb{mixed $k$-defense} $\mathcal{D}$ is a set
$\set{(D,p_D) : D\in\mathcal{D}_k(G)}$, where each~$p_D$ is a
non-negative rational and $\sum_{D\in\mathcal{D}_k(G)} p_D = 1$.
A \emphb{mixed $\ell$-attack} $\mathcal{A}$ is a
$\set{(A,q_A): A\in \mathcal{A}_\ell(G)}$, where each~$q_A$ is a
non-negative rational and $\sum_{A\in \mathcal{A}_\ell(G)} q_A = 1$.
Let $\mathcal{D}_k^{\textrm{mix}}$ and  $\mathcal{A}_\ell^{\textrm{mix}}$ denote the set of all mixed $k$-defenses and all $\ell$-attacks, respectively. 
For $\mathcal{D}\in \mathcal{D}_k^{\textrm{mix}}$ and  $\mathcal{A}\in \mathcal{A}_\ell^{\textrm{mix}}$, we define the expected payoff 
$\payoff(G,\mathcal{D},\mathcal{A})=\sum_{D\in \mathcal{D}_k}\sum_{A\in \mathcal{A}_\ell} p_D\cdot q_A\cdot\payoff(G,D,A).$
The value of the game is defined by
\[
\operatorname{val}(G,k,\ell)
   = \max_{\mathcal{D}\in \mathcal{D}_k^{\mathrm{mix}}}
      \min_{\mathcal{A}\in \mathcal{A}_\ell^{\mathrm{mix}}}
      \payoff(G,\mathcal{D},\mathcal{A})  = \min_{\mathcal{A}\in \mathcal{A}_\ell^{\mathrm{mix}}(G)}
      \max_{\mathcal{D}\in \mathcal{D}_k^{\mathrm{mix}}(G)}
      \payoff(G,\mathcal{D},\mathcal{A}).
\]

The second equality follows from the minimax theorem for finite two-player zero-sum games~\cite{vonNeumann1928}.
A pair $(\mathcal{D}^*,\mathcal{A}^*)\in \mathcal{D}_k^{\mathrm{mix}}\times \mathcal{A}_\ell^{\mathrm{mix}}$
is a \emphb{Nash equilibrium} if
\[
\forall \mathcal{D}\in \mathcal{D}_k^{\mathrm{mix}}:\quad
\forall \mathcal{A}\in \mathcal{A}_\ell^{\mathrm{mix}}:\quad
\payoff(G,\mathcal{D},\mathcal{A}^*)
\le
\payoff(G,\mathcal{D}^*,\mathcal{A}^*)
\le
\payoff(G,\mathcal{D}^*,\mathcal{A})\text{.}
\]
In other words, neither player can improve by a unilateral change of the strategy.

\subparagraph{Complexity}

Our objective is to understand the computational complexity of computing optimal strategies for the defender and the attacker under different game dynamics.
The complexity classes \NP and \SPTWO are the most relevant for our purposes.
In order to show hardness results, we use polynomial-time many-to-one reductions to show \NP-completeness and \SPTWO-completeness of natural decision problems associated with the optimization problems of interest.
For the class \SPTWO, we give a very brief introduction and refer the reader to the textbook by Arora and Barak~\cite[Chapter~5]{AroraB09}.
Schaefer and
Umans~\cite{SchaeferU02_1,SchaeferU02_2,SchaeferU08} give an extensive
list of complete problems for different classes in the polynomial
hierarchy.  For a very brief introduction, \SPTWO is defined as
$\NP^\NP$~-- a class of languages decidable in polynomial time by
nondeterministic Turing machines with access to an \NP-oracle, which
can decide any language in \NP in a single step of execution.  The
canonical complete problem for \SPTWO is the following.
\defdecproblem{\PSPTWOSAT}{Boolean formula $\varphi(x_1,\ldots,x_a,y_1,\ldots,y_b)$ with variables in two disjoint sets $\set{x_1,\ldots,x_a}$ and $\set{y_1,\ldots,y_b}$}{Is the following Boolean formula true?\\
  &$\exists{x_1,x_2,\ldots,x_a}:\quad
  \forall{y_1,y_2,\ldots,y_b}:\quad
  \varphi(x_1,\ldots,x_a,y_1,\ldots,y_b)$ }
Stockmeyer~\cite{Stockmeyer76} and Wrathall~\cite{Wrathall76}
independently proved that the class \SPTWO is exactly the class of
languages reducible to \PSPTWOSAT via polynomial-time many-to-one
reductions.

As our problems are naturally expressed as two-round games and ask if there exists a move for the first player such that, for every move of the second player, a certain payoff is guaranteed, \SPTWO is a natural complexity class for some of our problems.

\subparagraph{Problem statements}

We first consider division games in which the players play one after another, and the second player can observe the move of the first player before making their own move.
In this case, we can distinguish two variants: one in which the defender plays first and one in which the attacker plays first.
We are interested in computing an optimal response of the second player to a given move of the first player.

\defdecproblem
{\Ppuredeffixatt}
{A graph $G$, an attack $A$, positive integers $k$ and $\delta$.}
{Is there a $k$-defense~$D$ such that $\payoff(G,D,A) \ge \delta$?}

It is easy to see that \Ppuredeffixatt can be solved efficiently.

\begin{observation}
  \label{obs:optdef}
  \Ppuredeffixatt can be solved in time linear in the size of~$G$.
\end{observation}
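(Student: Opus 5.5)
With the attack $A$ fixed, the graph $G-A$ is fixed too, and a vertex survives exactly when its connected component in $G-A$ contains a controller. So the defender's payoff is the total size of the components of $G-A$ that contain at least one vertex of $D$. The plan is to show that the optimum is obtained by putting one controller in each of the $k$ largest components, and that this can be found in linear time.

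First, compute $G-A$ by marking the vertices of $A$ and ignoring them. Then run a BFS or DFS over the remaining vertices to find the connected components $C_1,\dots,C_m$ of $G-A$ and their sizes. This takes $\Oh{|V(G)|+|E(G)|}$ time.

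Next, observe that $\payoff(G,D,A)=\sum_{i:\,C_i\cap D\neq\emptyset}|C_i|$. A controller placed in $A$ contributes nothing, and a second controller in an already covered component adds nothing. So an optimal $k$-defense covers $\min(k,m)$ distinct components. By an exchange argument, these should be the $\min(k,m)$ largest ones. Any leftover controllers, needed if the definition insists on $|D|=k$ exactly, can be placed arbitrarily without lowering the payoff. The answer is yes if and only if the sum of the $\min(k,m)$ largest component sizes is at least $\delta$.

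The only non-routine point is doing the selection in linear time rather than $\Oh{m\log m}$. Component sizes are integers in $[1,|V(G)|]$, so counting sort (bucket sort) orders them in $\Oh{|V(G)|}$ time. Alternatively, linear-time selection of the $k$-th largest size followed by a single summation pass also works. The total running time is therefore linear in the size of $G$.
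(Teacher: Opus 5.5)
Your proposal is correct and follows the same route as the paper: compute $G-A$, find the component sizes by BFS, and compare $\delta$ with the sum of the $\min(k,m)$ largest sizes, using linear-time selection (the paper's choice) or counting sort to stay linear. You also justify optimality explicitly (the payoff equals the total size of the components of $G-A$ that contain a controller, plus an exchange argument and the remark about leftover controllers), which the paper dismisses with ``clearly''.
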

\begin{proof}
  Compute~$G'=G-A$ by removing the vertices in~$A$ from~$G$.
  Traverse~$G'$ (e.g., by breadth-first search) to compute a list (that
  is, a multiset)~$C$ of the sizes of the connected components
  of~$G'$.
    If $|C| \le k$, let $C'=C$.  If $|C| > k$, find (in
  $O(|C|)$ time) the $k$-th largest number~$c^\star$ in~$C$.  In this
  case, let~$C'$ be the sublist of $\{c \in C, c \ge c^\star\}$ that
  contains only as many copies of~$c^\star$ as needed so that
  $|C'|=k$.  Clearly, $\payoff(G,D,A)= \sum_{c \in C'} c$.
  It is easy to see that the algorithm can be implement to run in time
  linear in the size of~$G$.
\end{proof}

Symmetrically, we are interested in computing an optimal response to a
given defense, which corresponds to the following problem.

\defdecproblem
{\Ppureattfixdef}
{A graph $G$, a defense $D$, positive integers $\ell$ and $\alpha$.}
{Is there an $\ell$-attack~$A$ such that $\payoffatt(G,D,A) \ge \alpha$?}
We show that \Ppureattfixdef is \NP-complete; see \cref{thm:pureattfixdef} in \cref{sec:pure}.
It is an interesting phenomenon that this problem is computationally harder than \Ppuredeffixatt.

We are also interested in deciding whether there exists a
\emph{strategy} for the first player that guarantees a certain payoff
against \emph{any} response of the opponent.  As a consequence of the
above disparity, we can also expect that the complexities of deciding
optimal strategies for the first player differ.

\defdecproblem{\Ppuredefpureatt}
{A graph $G$, positive integers $k$, $\ell$, and $\delta$.}
{Is there a $k$-defense~$D$ such that, for every $\ell$-attack~$A$, $\payoff(G,D,A) \ge \delta$?}
We show that \Ppuredefpureatt is \SPTWO-complete (see \cref{thm:puredefpureatt}).

\defdecproblem
{\Ppureattpuredef}
{A graph $G$, positive integers $k$, $\ell$, and $\alpha$.}
{Is there an $\ell$-attack~$A$ such that, for every $k$-defense~$D$, $\payoffatt(G,D,A) \ge \alpha$?}
Due to \cref{obs:optdef}, it is easy to construct an optimal defense for a fixed attack.
Hence we cannot expect the same complexity for the variant of the game where
the attacker plays first.
Still \Ppureattpuredef is \NP-complete;
see \cref{thm:pureattpuredef}.

When we switch our attention to randomized strategies (see
\cref{sec:mixed}), the most natural question is to find the value of
the game.  The corresponding decision problem is as follows.
\defdecproblem
{\Pmixdefmixatt}
{A graph $G$, positive integers $k$, $\ell$, and $\delta$.}
{Is there a mixed $k$-defense~$\mathcal{D}$ such that, for every mixed $\ell$-attack~$\mathcal{A}$, we have $\payoff(G,\mathcal{D},\mathcal{A}) \ge \delta$?}

\section{Pure Strategies}
\label{sec:pure}

In this section, we settle the complexity of the deterministic
problems that we defined in \cref{sec:preliminaries}.  We start with a
given defense: Can an $\ell$-attack secure a payoff of~$\alpha$?

\begin{theorem}\label{thm:pureattfixdef}
    \Ppureattfixdef is \NP-complete.
\end{theorem}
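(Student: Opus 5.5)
Membership in \NP is immediate. The certificate is the attack~$A$ itself. Given $A$ with $|A|\le\ell$, we compute $\payoff(G,D,A)$ by one graph search in $G-A$ started from all vertices of $D\setminus A$. We then check that $\norm{V(G)}-\payoff(G,D,A)\ge\alpha$. This takes linear time, exactly as in the proof of \cref{obs:optdef}.

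For hardness I would reduce from \Pclique, the idea being that deleting vertices should separate as many vertices as possible from the controllers. Given a graph $H$ and an integer~$t$, I would build $G$ as follows.
\begin{itemize}
\item There is a single controller vertex $d$, and we set $D=\{d\}$.
\item Every vertex $v\in V(H)$ becomes a vertex $v$ of $G$ adjacent to~$d$.
\item Every edge $e=uv\in E(H)$ becomes a vertex $e$ of $G$ adjacent to $u$ and~$v$.
\item To stop the attacker from simply deleting $d$ or individual edge-vertices, I would replace $d$ by a large clique of controllers, or give $D$ many vertices each adjacent to all vertex-gadgets. Similarly, each $v$ should have weight, i.e.\ be a clique of $M$ twins, each adjacent to $d$ and to all edge-vertices incident to~$v$.
\end{itemize}
A simpler version avoids that weighting. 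Keep a single $d$, make $D$ a set of many vertices (say $\ell+1$ controller vertices, each adjacent to every vertex-vertex) so that the controllers cannot all be removed, and set $\ell=t$.

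The central calculation is as follows. Deleting the vertex-vertices of a set $S\subseteq V(H)$ with $|S|=t$ disables exactly the edge-vertices whose both endpoints lie in~$S$, so the payoff is $t+|E(H[S])|$. Setting $\alpha=t+\binom t2$ then makes a $t$-clique equivalent to a yes-instance.

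The main obstacle is the converse direction: showing that an optimal attack may be assumed to use only vertex-vertices. There are two kinds of deviation to rule out.
\begin{itemize}
\item \emph{Attacking an edge-vertex.} This gains exactly that one vertex. Replacing it by an endpoint not already in $S$ never decreases the payoff, and if both endpoints are already in $S$ the edge-vertex was disabled anyway.
\item \emph{Attacking controllers.} This is ruled out by having $\ell+1$ controllers all adjacent to every vertex-vertex, so at least one controller survives and every surviving vertex-vertex stays connected to it.
\end{itemize}
After this exchange argument, the payoff of any $t$-attack is at most $t+|E(H[S])|\le t+\binom t2$, with equality iff $S$ is a clique. That closes the reduction, which is clearly polynomial.
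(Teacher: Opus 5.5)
Your proposal is correct and is essentially the paper's reduction from \Pclique: edge-vertices subdivide the edges, $\ell=t$, and $\alpha=t+\binom{t}{2}$. The paper differs only in placing the controllers on the node vertices themselves (which it makes a clique) rather than adding $t+1$ universal controllers, and in bounding the payoff of an attack with $p$ node vertices directly by $t+\binom{p}{2}$ instead of using your exchange argument; in the write-up, present only your ``simpler version'', since the single-controller construction you start with fails (attacking $d$ disables every vertex).
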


\begin{proof}
  Membership in \NP is obvious as we can compute $\payoffatt(G,D,A)$ in polynomial time.
  To show \NP-hardness, we reduce from \Pclique, which is defined as
follows.

\defdecproblem
{\Pclique}
{A graph $G$ and a positive integer $t$.}
{Does $G$ contain a $t$-clique, that is, a subgraph that is isomorphic to~$K_t$?}

Given an instance $\langle G,t \rangle$ of \Pclique, we set
$n=|V(G)|$, $m=|E(G)|$, $\alpha=t+{t\choose 2}$, $\ell=t$, and
construct a graph~$H$ and a defense $D \subseteq V(H)$ such that there
exists an $\ell$-attack~$A$ with $\payoffatt(H,D,A) \ge \alpha$ if and
only if $G$ contains a $t$-clique.

Let $H$ be the graph that has a \emphb{node vertex}~$v'$ for every
node~$v$ of~$G$ and an \emphb{edge vertex}~$e'$ for every edge~$e$
of~$G$.  In~$H$, every two node vertices are adjacent, and every edge
vertex is adjacent to the two corresponding node vertices (that is,
$e'$ is adjacent to~$u'$ and~$v'$ if $e=\{u,v\}$ is an edge of~$G$).
We let $D=\set{v' : v \in V(G)}$ be the set of node vertices of~$H$.

We say that a $t$-attack~$A$ is \emphb{successful} if $\payoffatt(H,D,A) \ge t+{t\choose 2}$.
For $A$ to be successful, it has to disable at least
${t\choose 2}$ vertices apart from those in~$A$.
It remains to show that there exists a successful $t$-attack if and
only if $G$ contains a $t$-clique.

If $G$ contains a $t$-clique $v_1, v_2, \dots, v_t$, then the attack
$A=\set{v'_1,v'_2,\dots,v'_t}$ disables, apart from the $t$ vertices
in~$A$, the ${t\choose 2}$ vertices in the set
$\set{e' : e = \set{v_i,v_j}, 1\le i < j \le t}$.  Hence, $A$ is
successful.

For the other direction, let
$A=\set{v'_1,v'_2,\dots,v'_p,e'_1,e'_2,\ldots,e'_r}$ be a $t$-attack
with $t=p+r$.  Then $A$ disables the vertices in~$A$ and every edge
vertex~$e'$ such that both endpoints of~$e$ are in
$\set{v_1,v_2,\dots,v_p}$.  Hence,
$\payoffatt(H,D,A) \leq
 t+{p\choose 2}$.  If~$A$ is successful, then
$\payoffatt(H,D,A) = t+{t\choose 2}$.  Hence, in this case, it holds
that $p=t$ and $r=0$.  Moreover, each of the ${t\choose 2}$ pairs of
vertices in $\set{v_1,v_2,\dots,v_t}$ must be connected by an edge
in~$G$ (so that $A$ disables the corresponding edge vertex).  This
means that these vertices induce a $t$-clique in~$G$.
\end{proof}

Next, we ask: Is there a $k$-defense with a payoff of~$\delta$ against
every possible $\ell$-attack?

\begin{theorem}\label{thm:puredefpureatt}
  \Ppuredefpureatt is \SPTWO-complete.
\end{theorem}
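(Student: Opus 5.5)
Membership in \SPTWO is immediate from the form of the question: we guess a $k$-defense~$D$ and then ask the \NP-oracle whether some $\ell$-attack~$A$ has $\payoff(G,D,A) < \delta$. Such an attack is a polynomial certificate, since $\payoff$ is computable in polynomial time; equivalently, this is the complement of \Ppureattfixdef. The substance of the proof is hardness. We plan to reduce from a \SPTWO-complete problem whose inner part resembles the clique reduction of \cref{thm:pureattfixdef}. The natural candidate is the $\exists\forall$ version of \PSPTWOSAT with $\varphi$ in 3-DNF, so that the inner question ``is there an assignment~$y$ falsifying $\varphi$'' is an \NP\ question about 3-CNF $\neg\varphi$. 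An alternative is a known \SPTWO-complete interdiction problem, such as clique interdiction from Gr\"une and Wulf, or a \Pclique\ variant with existentially chosen vertex deletions. We would try the \PSPTWOSAT route first, since it gives the most control.

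The construction proceeds in two layers. The \emph{defender layer} encodes the choice of~$x$. For each variable~$x_i$ we add a pair of vertices $x_i^{\mathrm{T}}, x_i^{\mathrm{F}}$ and set $k=a$. Each literal vertex is attached to a large pendant gadget, for example a clique or a large set of private leaves of size~$M$, protected so that the attacker cannot cheaply destroy it. This forces the defender, in any defense reaching~$\delta$, to place exactly one controller per variable, on one of its two literals. The \emph{attacker layer} encodes the choice of~$y$ and the check of~$\neg\varphi$. For each $y_j$ we add literal vertices, and the attacker budget is $\ell=b$ plus an allowance. As in the proof of \cref{thm:pureattfixdef}, the attacker's payoff comes from small ``clause vertices'' that get disabled exactly when all routes to controllers are cut. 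A term $C$ of~$\varphi$ (a clause of $\neg\varphi$) is represented by a vertex adjacent to the literal vertices that \emph{satisfy}~$C$ and that the attacker would delete, together with edges to the $x$-literal vertices chosen by the defender. The intended semantics is that a term vertex is disconnected from every controller iff the attacker's $y$-deletions cut all its $y$-links and the defender's $x$-choice does not reach it. We would then weight the clause vertices by attaching $W$ leaves to each, or use threshold counting, so that the attacker wins iff \emph{every} term of~$\varphi$ is falsified, i.e., $\varphi(x,y)$ is false. Accordingly, $\delta$ is set to the total size minus $\ell$ minus the smallest ``all terms falsified'' payoff, adjusted so that a single term surviving suffices for the defender.

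The main obstacle is preventing unintended moves. The attacker might delete $x$-literal vertices or the defender's controllers, or pick both literals of one $y_j$, or use vertices other than literals. The defender might place controllers in the attacker layer. Our plan is to handle this by replacing each vertex whose deletion should be forbidden with $\ell+1$ twins (a blow-up), and by giving each intended choice a large weight gap. We will use nested magnitudes $1 \ll W \ll M$: the largest weight enforces the defender's structure, the middle weight enforces one literal per $y_j$ for the attacker (for instance via a matching gadget where skipping a variable lets a big component survive), and unit weights count terms. Correctness then splits into two directions. If some $x$ makes $\varphi(x,\cdot)$ valid, the corresponding defense survives every attack. Conversely, we normalize an arbitrary defense and attack to the intended forms, losing nothing by the weight gaps, and then read off~$x$ and a falsifying~$y$. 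Verifying these normalization claims carefully is where most of the work will lie. The logical content of the reduction is the ``all clauses disabled'' counting already used in \cref{thm:pureattfixdef}.
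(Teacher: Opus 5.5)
Your membership argument is fine. The hardness construction, however, fails at its core gadget, not only in the normalization steps you defer.

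In this game, a vertex that is neither attacked nor a controller is disabled exactly when \emph{all} of its neighbors are disabled. So a term vertex adjacent to literal vertices can only test a \emph{conjunction} of per-literal conditions. Falsifying a term $\lambda_1\wedge\lambda_2\wedge\lambda_3$ of~$\varphi$ (equivalently, satisfying a clause of $\neg\varphi$) is a \emph{disjunction}. As long as each neighbor's status encodes a single literal, as in your design, no such conjunction equals it. Concretely, take the natural reading of your description: the attacker deletes the vertices of true $y$-literals, and the term vertex is adjacent to its own $y$-literals and to the complements of its $x$-literals. Then the term vertex dies iff the term is \emph{satisfied}. Your threshold ``all term vertices disabled'' would thus reward the attacker for making every term true, which is the opposite of what you need. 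Other wirings give ``all literals false,'' which is not falsification either. Leaf weights, twin blow-ups and nested magnitudes cannot change which Boolean function a vertex computes.

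There is a second problem: survival is a global connectivity property. Once one term vertex is alive, it revives every undeleted literal vertex next to it. These include $x$-literals the defender did not choose and their pendant gadgets, and through them further term vertices. So neither ``the defender's $x$-choice does not reach it'' nor ``exactly one controller per variable'' holds without separation gadgets you have not described.

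The alternative you mention and set aside is the route that works, and it is the paper's. Reduce from \Pcliquenodedel, whose \SPTWO-hardness already absorbs the disjunctive clause logic. Keep the gadget of \cref{thm:pureattfixdef}, where the conjunctive semantics is exactly right: an edge vertex dies iff both endpoints are attacked. Hence a $t$-attack on node vertices disables $\binom{t}{2}$ edge vertices iff it is a $t$-clique.

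To let the defender ``delete'' vertices of $G$, attach a pendant copy vertex~$v''$ to every node vertex~$v'$. Set $k=s$, $\ell=t$ and $\delta=2n+m-\bigl(\binom{t}{2}+2t\bigr)+1$. Pad the instance so that $s\gg t$; this keeps the node clique alive under every attack. Then only attacks on node vertices pay off. An attack disables $\binom{t}{2}+2t$ vertices iff it consists of the node vertices of a $t$-clique of~$G$ none of whose copy or edge vertices is defended. Hence $D=\{v'':v\in X\}$ works for a deletion set~$X$. Conversely, a successful defense can be normalized to copy vertices by replacing $v'$ with~$v''$, and $e'$ (for $e=\{v,w\}$) with~$v''$.
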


\begin{proof}
  Membership in \SPTWO is clear: Given a graph~$G$, a $k$-defense~$D$,
  and integers~$\ell$ and~$\delta$, we can use an \NP-oracle to test
  whether, for every $\ell$-attack~$A$, we have
  $\payoff(G,D,A) \ge \delta$.
  
  To show hardness, we reduce from \Pcliquenodedel, which is
  \SPTWO-hard~\cite{Rutenburg86ComplexityGeneralizedGraphColoring,
    ChaplickEtAl25NoteComplexityDefensiveDomination}.

  \defdecproblem
  {\Pcliquenodedel}
  {A graph $G$, positive integers $s$ and $t$.}
  {Is there a set~$X$ of at most $s$~vertices such that $G\setminus X$
    does not contain a $t$-clique?}

  Let~$\langle G,s,t \rangle$ be an instance of \Pcliquenodedel, let
  $n=|V(G)|$, and let $m=|E(G)|$.  W.l.o.g., we can assume
  that~$n \ge s \gg t$ (otherwise we can obtain an equivalent instance
  with a larger~$s$ by adding 
  vertices 
  adjacent to all vertices, thereby necessitating their removal).

  To obtain an instance of \Ppuredefpureatt, we set $k=s$, $\ell=t$,
  $\delta=2n+m-\big(\binom{t}{2}+2t\big)+1$, and construct a graph~$H$
  similarly as in the proof of \cref{thm:pureattfixdef}.  Again, $H$
  has a \emphb{node vertex}~$v'$ for every node~$v$ of~$G$ and an
  \emphb{edge vertex}~$e'$ for every edge~$e$ of~$G$.  In~$H$, every
  two node vertices are adjacent, and every edge vertex is adjacent to
  the two corresponding node vertices.  Additionally, for every
  node~$v$ of~$G$, $H$ has a \emphb{copy vertex}~$v''$ that is adjacent
  only to the corresponding node vertex~$v'$.
  \cref{fig:clique-node-deletion-reduction} depicts an example.

  \begin{figure}[tbp]

    \hfil
    \begin{subfigure}[b]{.28\textwidth}
      \centering
      \includegraphics[page=1]{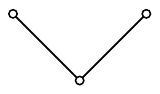}
      \caption{Graph of an instance of \Pcliquenodedel.}
      \label{fig:original}
    \end{subfigure}
    \hfil
    \begin{subfigure}[b]{.47\textwidth}
      \centering
      \includegraphics[page=2]{figures/clique-node-deletion-reduction.pdf}
      \caption{The corresponding graph in an instance of \Ppuredefpureatt.}
      \label{fig:reduced}
    \end{subfigure}
    \hfil
    
    \caption{Going from (a) to~(b), edges are subdivided by edge
      vertices (squares).  Node vertices (disks) are connected into a
      clique, and every node vertex has a private copy vertex
      (triangle).}
    \label{fig:clique-node-deletion-reduction}
  \end{figure}

  The resulting graph~$H$ contains~$2n+m$ vertices.
  Thus, given a $k$-defense~$D$ and an $\ell$-attack~$A$, it holds that $\payoff(H,D,A)\geq \delta$ if and only if~$\payoffatt(H,D,A)<\binom{t}{2}+2t$.
  Our aim is to show the following equivalence:  There is a set~$X$ of $s$ vertices in~$G$ such that~$G\setminus X$ contains no $t$-clique if and only if there exists an $s$-defense~$D$ in~$H$ such that $\payoffatt(H,D,A)<\binom{t}{2}+2t$ for every $t$-attack~$A$.
  The idea is that the defense~$D$ corresponds to~$X$, whereas an attack~$A$ with~$\payoffatt(H,D,A) \ge \binom{t}{2}+2t$ would correspond to a~$t$-clique in~$G\setminus X$.

  First, observe that if any of the node vertices in~$H$ survives an attack, all node vertices (except those attacked directly) survive, as they form a clique in~$H$.
  Given that $n\geq s\gg\binom{t}{2}+2t$, any sensible defense must ensure that at least one node vertex survives.
  As this can trivially be achieved by placing all defenses on node vertices, we can assume w.l.o.g.\ that regardless of the attack, at least one node vertex, and therefore all node vertices that are not attacked directly, survive.
  Therefore, since a node vertex can only be disabled if attacked directly, attacking an edge vertex or a copy vertex yields only one disabled vertex.
  As attacking a node vertex disables at least one vertex,
  we can assume w.l.o.g.\ that any attack attacks only node vertices.

  Since we can assume that exclusively node vertices are attacked, a $t$-attack~$A$ can ensure $\payoffatt(H,D,A) \ge \binom{t}{2}+2t$ only if (i)~the~$t$ vertices in~$G$ that correspond to the node vertices in~$A$ form a clique and (ii)~none of the edge vertices of that clique or the copies of node vertices in~$A$ are defended.
  Thus, if~$\langle G,s,t \rangle$ is a yes-instance of \Pcliquenodedel with deletion set~$X$, we let $D=\set{v''\mid v\in X}$ be our defense.
  Suppose that there is an attack~$A$ that disables at least~$\binom{t}{2}+2t$ vertices.
  As we have argued, this implies that the vertices attacked by~$A$ form a $t$-clique in~$G$, and that none of their copies are defended.
  Thus, they form a $t$-clique in~$G\setminus X$, a contradiction to~$X$ being a solution to the \Pcliquenodedel instance.
  Therefore, for any attack~$A$, it holds that~$\payoff(H,D,A)\geq \delta$.

  For the other direction, let~$D$ be a defense that is \emphb{successful}, that is, $\payoff(H,D,A) \ge \delta$ for any attack~$A$.
  In the following, we show that we can assume, w.l.o.g., that~$D$ defends only copy vertices.
  First, suppose that~$D$ defends a node vertex~$v'$.
  Then, we can instead defend its copy~$v''$ without increasing the number of disabled vertices under any attack~$A$ (recall that $v''$ is never attacked directly, as we assume that only node vertices are attacked).
  Indeed, if~$v'$ is attacked in~$A$, this change leads to one additional vertex, the copy~$v''$, surviving.
  Otherwise, if~$v'$ is not attacked, then~$v'$ and~$v''$ survive under both defenses. 
  Second, suppose that~$D$ defends an edge vertex~$e'$, where~$e=\{v,w\}$.
  We claim that $D'=(D \setminus \{e'\}) \cup \{v''\}$ is also successful.
  Indeed, suppose that there is an attack~$A$ under~$D'$ that disables at least~$\binom{t}{2}+2t$ vertices.
  As we have argued before, this implies that the $t$ vertices in~$A$ correspond to a $t$-clique~$K$ in~$G$, and that none of the copy or edge vertices corresponding to~$K$ are defended.
  Therefore,~$v$, and by extension~$e$, cannot be part of~$K$.
  As the defense is unchanged except for~$e'$ and~$v''$, this implies that~$A$ also disables at least~$\binom{t}{2}+2t$ vertices under~$D$, a contradiction.
  Therefore, we can assume that~$D$ defends only copy vertices.
  Given a successful defense~$D$ consisting exclusively of copy vertices, let~$X=\set{v \mid v'' \in D}$.
  By construction, the graph~$G\setminus X$ contains no $t$-clique~-- otherwise attacking a $t$-clique would disable~$\binom{t}{2}+2t$ vertices, a contradiction.
\end{proof}

Finally, we ask: Is there an $\ell$-attack that achieves a payoff
of~$\alpha$ against every $k$-defense?

\begin{theorem}
  \label{thm:pureattpuredef}
  \Ppureattpuredef is \NP-complete.
\end{theorem}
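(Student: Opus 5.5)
Membership in \NP follows from \cref{obs:optdef}. We guess an $\ell$-attack~$A$ as the certificate. Given~$A$, the best reply of the defender is computed in linear time: sort the component sizes of $G-A$ and defend the $k$ largest components. So we can check in polynomial time whether $\min_D \payoffatt(G,D,A) \ge \alpha$. Equivalently, the attacker wins if and only if removing~$A$ leaves a graph in which the $k$ largest components together contain at most $|V(G)|-\alpha$ vertices.

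For \NP-hardness I would again reduce from \Pclique, reusing the subdivision construction from the proof of \cref{thm:pureattfixdef}. Given $\langle G,t\rangle$ with $n=|V(G)|$ and $m=|E(G)|$, let~$H$ be $G$ with every edge subdivided by an edge vertex, but \emph{without} the clique on the node vertices. Set $\ell=t$ and $k=1$. The idea is that attacking the $t$ node vertices of a $t$-clique isolates its $\binom{t}{2}$ edge vertices as singleton components. With a single controller, the defender then saves only the largest component. Singletons are worthless to the defender, so the score equals the size of the largest remaining component.

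This payoff does not reward isolating many vertices in the way we need. So I would instead pick~$k$ large, namely $k$ equal to the number of components the defender can cover except for the isolated singletons. Concretely, attach to each node vertex~$v'$ a large pendant gadget, say a path or star of $M$ vertices with $M\gg m$. Each surviving node then sits in a heavy component. Choose $k$ and~$\alpha$ so that the defender can protect all heavy components, and the contested quantity is the number of edge vertices left in components without a controller.

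The cleaner variant I would actually pursue is the following. Keep the node vertices as a clique, as in the proof of \cref{thm:pureattfixdef}. Since the $n-t$ surviving node vertices always form one connected component, set $k=1$. The defender's best reply to any attack~$A$ is then essentially the big component containing the unattacked node vertices, except when that component is smaller than some other component, which cannot happen for $t<n$. Hence $\payoffatt(H,D,A)$ is exactly the quantity analysed in \cref{thm:pureattfixdef}: $|A|$ plus the edge vertices whose both endpoints are attacked. Take $\alpha=t+\binom{t}{2}$. The same counting argument ($\payoffatt\le t+\binom{p}{2}$ when $A$ contains $p$ node vertices) shows that a successful attack exists if and only if $G$ has a $t$-clique.

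The main obstacle is to verify that a single defender cannot do better by being placed outside the main component. That component has at least $n-t+(\text{edges not inside }A)$ vertices, while every other component of $H-A$ is a singleton edge vertex. So the claim holds once we assume $t<n$, which is without loss of generality. With this in place, the \emph{for every $k$-defense} quantifier collapses to the fixed defense from \cref{thm:pureattfixdef}, and the reduction goes through.
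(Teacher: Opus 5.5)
Your final reduction, the variant that keeps the node vertices as a clique, is correct, but it takes a different route from the paper. The paper reduces from \Pbalancedvertexseparator by the identity map on graphs, with $k=1$, $\ell=h$ and $\alpha=n-\lceil\frac{n-h}{2}\rceil$. With a single controller the defender saves exactly the largest component of $G-A$, so the attacker's question is literally whether $h$ deleted vertices leave only components of size at most $\lceil\frac{n-h}{2}\rceil$.

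You also use $k=1$ and the same ``defender takes the largest component'' fact, but you apply it to the split graph $H$ from \cref{thm:pureattfixdef}. For $t<n$, the unattacked node vertices lie in one component, and every other component is an isolated, unattacked edge vertex whose endpoints are both attacked. Hence the defender's best reply coincides with the fixed defense analysed there, and the bound $\payoffatt\le|A|+\binom{p}{2}$ finishes the argument.

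The paper's route is a one-line reduction, but it relies on the external hardness of a specific balanced-separator formulation. Yours is self-contained within the paper and starts from the classical \Pclique. It also shows hardness already on split graphs with $k=1$. Because $\ell=t$ and $\alpha=t+\binom{t}{2}$ depend only on $t$, it is moreover a parameterized reduction, which gives W[1]-hardness in $\ell$.

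In a write-up, delete the two exploratory constructions: the subdivided graph without the clique, and the pendant-gadget idea with large $k$. They are incomplete and not needed. Also state explicitly that Clique instances with $t\ge n$ are decided directly and mapped to fixed yes/no instances.
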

\begin{proof}
  Membership in \NP is due to \cref{obs:optdef}.
  To show \NP-hardness, we reduce from the following \NP-hard
  problem~\cite{BuiJ92}.
  \defdecproblem
  {\Pbalancedvertexseparator}
  {A graph $G$ and a positive integer $h$.}
  {Is there a set~$X$ of at most $h$ vertices such that every connected component of~$G\setminus X$ has size at most $\lceil\frac{n-h}{2}\rceil$?}

  Given an instance $\langle G,h \rangle$ of \Pbalancedvertexseparator, we set $n=|V(G)|$, $k=1$, $\ell=h$, $\delta=\ceil{\frac{n-h}{2}}$, and $\alpha=n-\delta$, and construct an instance $\langle G,k,\ell,\alpha \rangle$ of \Ppureattpuredef.

  When $\langle G,h \rangle$ is a yes-instance of \Pbalancedvertexseparator, attacking the vertices in~$X$ disables at least $h$ vertices, and the largest connected component of the remaining graph has size at most $\lceil\frac{n-h}{2}\rceil$.  Hence, for any defense~$D$ of size~$1$,
  $\payoff(G,D,X) \le \ceil{\frac{n-h}{2}} = \delta = n-\alpha$, and $\langle G,k,\ell,\alpha \rangle$ is a yes-instance of \Ppureattpuredef.

  When $\langle G,k,\ell,\alpha \rangle$ is a yes-instance of \Ppureattpuredef, there is an attack~$A$ of size~$h$ such that for any defense~$D$ of size~$1$, $\payoff(G,D,A) \le \delta = \ceil{\frac{n-h}{2}}$.  In particular, this implies that the largest connected component of~$G\setminus A$ has size at most $\ceil{\frac{n-h}{2}}$.  Hence, $X = A$ is a solution to the instance $\langle G,h \rangle$ of \Pbalancedvertexseparator, and $\langle G,h \rangle$ is a yes-instance.
\end{proof}

\section{Mixed Strategies}
\label{sec:mixed}

In this section, we consider mixed strategies for both players.
First, consider for a moment a variant, where we are given a set of pure strategies for the defender and a set of pure strategies for the attacker, and we are asked to compute the value of the game when players are restricted to use only these strategies.
The assignment of weights to the strategies in a Nash equilibrium for such a variant can be solved in polynomial time by linear programming.
As there are exponentially many pure strategies for both players, this gives us an \EXPTIME algorithm for \Pmixdefmixatt.

A natural heuristic to solve \Pmixdefmixatt is to start with a small set of pure strategies for both players, compute a Nash equilibrium for this restricted game, and then iteratively add pure strategies that are best responses to the current mixed strategy of the opponent.
This heuristic is known as \emph{column generation} and is a common approach to solve large linear programs.
It is not guaranteed to terminate in polynomial time, but it can be efficient in practice when there are only a few pure strategies that are relevant for the optimal mixed strategy.
In our case, even for the simple example of a clique~$K_{2n}$ and
$k=\ell=n$, it is easy to see that the optimal mixed strategies for
both players must use all possible $n$-element subsets of the $2n$
vertices.
Thus column generation would require an exponential number of
iterations to find the optimal mixed strategies.

We establish another difficulty within the column generation approach, by showing that given a mixed strategy for one player, it is \NP-hard to compute a best response for the other player.
We formalize these problems below, starting with the mixed defense.
\defdecproblem
{\Ppureattmixdef}
{A graph $G$, a mixed defense $\mathcal{D}$, positive integers $\ell$ and $\alpha$.}
{Is there an $\ell$-attack~$A$ such that $\payoffatt(G,\mathcal{D},A) \ge \alpha$?}
By \cref{thm:pureattfixdef}, we get that \Ppureattmixdef is \NP-complete even in the special case where the defense contains only one strategy with probability~$1$ and thus, the following corollary.
\begin{corollary}
  \label{thm:pureattmixdef}
  \Ppureattmixdef is \NP-complete.
\end{corollary}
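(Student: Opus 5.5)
The statement splits into membership in \NP and \NP-hardness, and hardness is inherited directly from \cref{thm:pureattfixdef}.

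\emph{Membership in \NP.} The certificate is an $\ell$-attack~$A$, which has polynomial size. To verify it, evaluate
$\payoffatt(G,\mathcal{D},A)=\sum_{D} p_D\cdot\payoffatt(G,D,A)$. We assume, as is implicit in the problem statement, that the mixed defense $\mathcal{D}$ is given explicitly as a list of its support $\{(D,p_D): p_D>0\}$ with rational probabilities. Then the support has size polynomial in the input.

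For each $D$ in the support, compute $\payoffatt(G,D,A)=|V(G)|-\payoff(G,D,A)$ by a graph traversal of $G-A$ from the surviving controllers in $D\setminus A$. This takes linear time per defense. The weighted sum is a rational number whose bit length is polynomial in the input, so comparing it with $\alpha$ takes polynomial time. This is the only step that needs any care: the encoding of $\mathcal{D}$. If $\mathcal{D}$ were given implicitly over all of $\mathcal{D}_k(G)$, the sum would have exponentially many terms, so I would state the explicit-support convention before giving the argument.

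\emph{\NP-hardness.} Reduce from \Ppureattfixdef. Given an instance $\langle G,D,\ell,\alpha\rangle$, set $k=|D|$ and output $\langle G,\mathcal{D},\ell,\alpha\rangle$, where $\mathcal{D}=\{(D,1)\}$ is the degenerate mixed defense. For every attack~$A$ we have $\payoffatt(G,\mathcal{D},A)=\payoffatt(G,D,A)$, so the two instances are equivalent. The transformation is clearly polynomial, so \NP-hardness follows from \cref{thm:pureattfixdef}.
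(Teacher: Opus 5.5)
Your proof is correct and matches the paper's argument: hardness comes from \cref{thm:pureattfixdef} by taking the degenerate mixed defense $\{(D,1)\}$ with $k=|D|$. Your explicit membership argument in \NP, with $\mathcal{D}$ encoded by its support, fills in a step the paper leaves implicit, and that same encoding is what keeps the hardness reduction polynomial.
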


To get a similar result for column generation for the defender, we need to be more careful.

\defdecproblem
{\Ppuredefmixatt}
{A graph $G$, a mixed attack $\mathcal{A}$, positive integers $k$ and $\delta$.}
{Is there a $k$-defense~$D$ such that $\payoff(G,D,\mathcal{A}) \ge \delta$?}

Recall that the optimal response of the defender to a fixed attack can be computed in polynomial time by \cref{obs:optdef}.
Nevertheless, we show that, given a mixed attack with multiple strategies, it is \NP-hard to compute an optimal response for the defender.

\begin{theorem}
  \label{thm:puredefmixatt}
  \Ppuredefmixatt is \NP-complete.
\end{theorem}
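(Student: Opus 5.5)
Membership in \NP is immediate: given a $k$-defense~$D$, we compute $\payoff(G,D,A)$ for each of the polynomially many attacks~$A$ in the support of~$\mathcal{A}$ (the support is given explicitly in the input) and form the weighted sum. For hardness I plan to reduce from \Psetcover (or, equivalently for our purposes, from \textsc{Vertex Cover}). Let $\langle U, \mathcal{S}, k\rangle$ be an instance with universe $U=\set{u_1,\dots,u_N}$ and sets $S_1,\dots,S_M$. The idea is that each element $u_j$ gives rise to one pure attack $A_j$ in the mixed attack, played with probability $1/N$, and that under $A_j$ the graph falls apart into components in a way that rewards the defender substantially only if some defended vertex lies in a set containing~$u_j$.

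Concretely, I will build $H$ as follows. For each set $S_i$ I create a \emph{set gadget}: a large clique (or star) $Q_i$ of size $L$, where $L \gg N$, together with a distinguished hub vertex $s_i$. I also make one \emph{element vertex} $x_j$ per element. The attacks $A_j$ are designed so that $G - A_j$ consists of one huge component $B_j$ containing all gadgets $Q_i$ with $u_j\in S_i$, and a separate small piece for every other gadget. The simplest way to realise this is with a path of cut vertices: the gadgets $Q_i$ are glued to a common root $r$ via connector vertices $c_{i}$, and the attack $A_j$ deletes $r$ together with connectors $c_i$ for all $i$ with $u_j\notin S_i$, padded with dummy isolated vertices so that all attacks have exactly $\ell$ vertices. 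To make the components under $A_j$ coincide with the desired structure, the gadgets of sets containing $u_j$ are additionally linked through the element vertex $x_j$, which is adjacent to $c_i$ for every $S_i\ni u_j$. Then in $H-A_j$ the gadgets of the sets containing $u_j$ form one component of size about $|\{i: u_j\in S_i\}|\cdot L$, while each other gadget is a separate component of size $L$.

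Next I would calibrate the defender's payoff. Under attack $A_j$, a defender with $k$ controllers scores about $L$ per defended gadget if none of them hits a set containing $u_j$, but gets the entire merged component (which is roughly $\deg(u_j)\cdot L$) if one does. This asymmetry is awkward because degrees vary; to normalise, I will pad each element to have equal frequency, or better, weight gadget sizes so that a covered element yields a fixed bonus. A cleaner alternative that I will try first: reduce from \textsc{Vertex Cover} on a graph with edges as elements, where each element lies in exactly two sets, so every merged component consists of exactly two gadgets. Then the defender's payoff under $A_j$ equals $kL$ if the defense misses both endpoints, and $(k+1)L$ up to lower-order terms if it hits one (it gets the two merged gadgets with one controller), and still $kL+L$ if it hits both, since the two controllers then share one component. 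Hence $\payoff(H,D,\mathcal{A}) \approx kL + L\cdot(\text{fraction of covered edges})$, and setting $\delta = (k+1)L$ minus a small slack asks precisely whether some $k$ vertices cover every edge. I would first argue that w.l.o.g.\ controllers sit inside gadgets, by a swap argument as in the proof of \cref{thm:puredefpureatt}.

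The main obstacle is the bookkeeping of the lower-order terms: connector vertices, element vertices, and dummy padding vertices all contribute to component sizes, and a defender might place controllers on stray small components or on element vertices. I expect to handle this by choosing $L$ larger than the total number of non-gadget vertices times $N$, so that any loss of a single gadget-unit in even one attack outweighs every small contribution, and by choosing $\delta$ strictly between the payoff of a covering defense and that of the best non-covering one.
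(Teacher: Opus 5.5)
Your high-level plan (one pure attack per element, played with probability $1/N$, rewarding the defender only when a controller lies in a set containing that element) is the right idea. The gadget construction, however, does not implement it, and the calibration fails at order $L$, not just in the lower-order terms.

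Consider your \textsc{Vertex Cover} version and the attack $A_e$ with $e=\{u,w\}$. Every gadget $Q_v$ with $v\notin e$ survives as its own component of size about $L$, and any controller placed there collects it. So if both $u$ and $w$ are defended, those two controllers share $B_e$ (size about $2L$), and the other $k-2$ controllers collect about $(k-2)L$. The total is about $kL$, not $kL+L$ as you claim. For defenses $I$ placed in gadgets, the expected payoff $\payoff(H,D,\mathcal{A})$ is therefore about $kL$ plus $\frac{L}{N}$ times the number of edges with \emph{exactly} one endpoint in $I$. A threshold just below $(k+1)L$ then asks whether some $k$ vertices form one colour class of a proper $2$-colouring, which is decidable in polynomial time.

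Concretely, take the triangle with $k=2$. It is a yes-instance of \textsc{Vertex Cover}, but every pair of gadgets yields only about $\frac{8}{3}L$. Placing controllers on element vertices or connectors does no better. This is a loss of $L/3=L/N$ below $3L$, which is exactly the margin your slack is supposed to detect, so the reduction maps this yes-instance to a no-instance.

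The general \Psetcover version has the same defect. Let $F_j$ be the family of sets containing $u_j$. A defense $I$ that hits $u_j$ scores about $(k+|F_j|-|I\cap F_j|)L$ under $A_j$, so every additional hit of an element costs $L$. Equalizing frequencies does not cure this. The real cause is that the gadgets of sets not containing $u_j$ stay alive as large components the defender can collect.

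Your planned swap argument (w.l.o.g.\ controllers sit inside gadgets) also fails. Under $A_j$, the element vertex $x_{j'}$ lies in $B_j$ whenever $u_j$ and $u_{j'}$ share a set. Hence a single controller on $x_{j'}$ reaches the merged component for every element that co-occurs with $u_{j'}$. As far as reaching merged components goes, it acts like controllers on all sets of $F_{j'}$ at once. This is an order-$L$ effect, so choosing $L$ large does not neutralize it.

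The paper avoids all of this by using no gadgets. It takes $G=K_m$ on the sets and lets $A_x$ delete exactly the sets not containing $x$. Then $G-A_x$ is a single clique. The defender scores all $m-|A_x|$ remaining vertices if some controller lies in a set containing $x$, and $0$ otherwise. This all-or-nothing payoff does not depend on how often $x$ is hit. Hence $\delta=\frac{1}{|X|}\sum_{x\in X}(m-|A_x|)$ is attained if and only if the defense corresponds to a set cover, with no frequency normalization or lower-order bookkeeping.

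In your framework the repair amounts to exactly this: delete (rather than isolate) the sets not containing $u_j$, drop the element vertices, and connect the set vertices directly.
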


\begin{proof}
  Membership in \NP is clear: Given a graph $G$, a mixed attack
  $\mathcal{A}$, a $k$-defense~$D$, and an integer~$\delta$, one can
  verify in polynomial time whether
  $\payoff(G,D,\mathcal{A}) \ge \delta$.
  
  To show the \NP-hardness, we reduce from the following classic \NP-complete problem~\cite{Karp72}.
  \defdecproblem{\Psetcover}
  {A set $X$, a family $\mathcal{S}$ of subsets of $X$, and a positive integer $h$.}
  {Is there a subfamily $\mathcal{S}' \subseteq \mathcal{S}$ with $\norm{\mathcal{S}'} = h$ such that $\bigcup_{S \in \mathcal{S}'} S = X$?}

  Given an instance $\langle X,\mathcal{S},h \rangle$ of
  \Psetcover with $\mathcal{S} = \{S_1,\dots,S_m\}$, we
  construct an instance $\langle G,\mathcal{A},k,\delta \rangle$ of
  \Ppuredefmixatt.
  Let $G=K_m$, and let $V(G)=\{1,2,\dots,m\}$.  For each $x \in X$, we
  perform the attack $A_x = \{ i : x \notin S_i \}$ with probability
  $1/|X|$.  (Since $\mathcal{S}$ covers~$X$, for every $x \in X$, we
  have that $|A_x| < m$.)  We set $\mathcal{A}=\{A_x: x\in X\}$,
  $k = h$, and $\delta = \sum_{x \in X} (m-|A_x|)/|X|$.

  If $\langle X,\mathcal{S},h \rangle$ is a yes-instance of
  \Psetcover, then there is a set cover
  $\mathcal{S}' \subseteq \mathcal{S}$ of size~$h$.  Let
  $D=\{i : S_i \in \mathcal{S}'\}$ and note that
  $|D|=|\mathcal{S}'|=h=k$.  Then, for every $x \in X$, the
  controllers in $[m] \setminus A_x = \{i: x \in S_i\}$ survive
  attack~$A_x$.  Hence, $\payoff(G,D,\mathcal{A}) = \delta$, and
  $\langle G,\mathcal{A},k,\delta \rangle$ is a yes-instance of
  controller placement.

  On the other hand, if $\langle G,\mathcal{A},k,\delta \rangle$ is a
  yes-instance of controller placement, then there exists a
  $k$-defense~$D$ with $\payoff(G,D,\mathcal{A}) = \delta$.  Clearly,
  for every~$x \in X$, $\payoff(G,D,A_x) \le m-|A_x|$.  Hence,
  $\payoff(G,D,\mathcal{A}) = \delta$ implies that, for
  every~$x \in X$, we actually have $\payoff(G,D,A_x) = m-|A_x|$.
  This is only possible if $[m] \setminus A_x \subseteq D$.  In
  particular, there is an~$i$ in~$D$ such that $x \in S_i$.  In other
  words, $\{ S_i : i \in D\}$ is a set cover of size~$h=k$, and
  $\langle X,\mathcal{S},h \rangle$ is a yes-instance of
  \Psetcover.
\end{proof}

\section{Restricted Graph Classes}
\label{sec:restricted}

We know from \cref{thm:pureattfixdef} that \Ppureattfixdef is \NP-hard in general.  In the following, we show that it is polynomial-time solvable on interval graphs as well as on graphs of bounded treewidth. This means that \Ppuredefpureatt is in \NP for these graph classes.  We start with interval graphs.

For \Ppureattfixdef, it suffices to consider connected graphs.

\begin{lemma}\label{lem:component-attack-knapsack}
  Given a graph~$G$, a defense~$D$, an integer~$\ell$, and for each
  connected component~$C$ of~$G$ and~$0\leq\ell'\leq \ell$ an optimal
  $\ell'$-attack on~$C$ under defense~$D$, we can find an optimal
  $\ell$-attack on~$G$ under defense~$D$ in $\Oh{c\cdot \ell^2}$ time,
  where $c$ is the number of connected~components.\todo{of~$G$.}
\end{lemma}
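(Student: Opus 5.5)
The plan is to reduce the problem to a knapsack-style dynamic program over the connected components, using additivity of the attacker's payoff across components. Let $C_1,\dots,C_c$ be the components of~$G$. For each $i$ and $0\le\ell'\le\ell$, let $f_i(\ell')=\payoffatt(C_i,D\cap V(C_i),A_i^{\ell'})$, where $A_i^{\ell'}$ is the given optimal $\ell'$-attack on~$C_i$. If $|V(C_i)|<\ell'$, we set $f_i(\ell')=|V(C_i)|$ by taking the whole component. This requires no extra work, since attacking every vertex of a component is optimal when the budget suffices. Alternatively, we may assume the given optimal attacks already handle this case, since attacks have size at most~$\ell'$ by the paper's definition of an $\ell$-attack.

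The first step is the decomposition claim. For any attack $A$ on~$G$, write $A_i=A\cap V(C_i)$. Then
\[
\payoffatt(G,D,A)=\sum_{i=1}^{c}\payoffatt\bigl(C_i,D\cap V(C_i),A_i\bigr).
\]
This holds because a vertex $v\in V(C_i)\setminus A$ survives in $G-A$ exactly when it reaches a controller through a path in $C_i-A_i$. No path leaves a component, and $C_i-A_i$ is a union of components of $G-A$.

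Consequently, the optimum over $\ell$-attacks equals
\[
\max\Bigl\{\sum_i f_i(\ell_i) : \ell_i\ge 0,\ \sum_i \ell_i\le \ell\Bigr\}.
\]
For the direction ``$\le$'', any attack splits into parts $A_i$ of sizes $\ell_i$, and each part is dominated by the optimal $\ell_i$-attack on~$C_i$. For the direction ``$\ge$'', the union of the given optimal attacks $A_i^{\ell_i}$ is a valid attack of size at most~$\ell$ achieving the sum.

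Next comes the dynamic program. Let $T[i][b]$ be the best total over the first $i$ components using budget at most~$b$, for $0\le b\le\ell$. Set $T[0][b]=0$ and
\[
T[i][b]=\max_{0\le \ell'\le b}\bigl(T[i-1][b-\ell']+f_i(\ell')\bigr).
\]
Each entry takes $O(\ell)$ time, and there are $c(\ell+1)$ entries, so the total is $O(c\cdot\ell^2)$. We store the argmax choices so that the attack can be reconstructed by backtracking and taking the union of the chosen $A_i^{\ell'}$. The reconstruction is also within the time bound, provided we output pointers or indices to the given attacks rather than copying the vertex sets. Alternatively, if the sets must be copied, their total size is at most~$\ell$, which is dominated by the bound.

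The only subtle point, and the one I expect to need care, is the bookkeeping on the cost model. The values $f_i(\ell')$ are assumed available in $O(1)$ time each, either given with the attacks or precomputed. Otherwise, evaluating $\payoffatt$ on each component would add a term linear in the graph size, which the stated bound does not include. I would state this convention explicitly. Apart from that, the proof is the standard multiple-choice knapsack argument, and correctness follows from the additivity established in the first step.
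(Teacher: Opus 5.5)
Your proposal is correct and is essentially the paper's proof. Both use the same knapsack-style dynamic program over the components, with $O(\ell)$ work per entry, $O(c\cdot\ell)$ entries, and backtracking to recover the attack; you also supply the additivity argument that the paper leaves implicit. One detail in your favor: you start from $T[0][b]=0$, let the recurrence include $\ell'=0$ with $f_i(0)$, and evaluate each $f_i$ on $C_i$ alone, so you correctly handle components with no controller, which the paper's base case $T[i,0]=0$ (and its use of $\payoffatt(G,D,A_{i,\ell'})$ on all of~$G$) tacitly assumes away.
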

\begin{proof}
  Let~$\set{C_1,\dots,C_c}$ be the connected components of~$G$ and,
  for~$i\in[c]$ and~$0\leq\ell'\leq\ell$, let~$A_{i,\ell'}$ be an
  optimal $\ell'$-attack on component~$C_i$.
  Let~$P_{i,\ell'}=\payoffatt(G,D,A_{i,\ell'})$. 
  Let~$G_i=G[C_1\cup\dots\cup C_i]$.
  For~$i\in[c]$ and~$0\leq\ell'\leq \ell$, let~$T[i,\ell']$ be the payoff of the optimal $\ell'$-attack on~$G_i$.
  Clearly, for every~$i\in[c]$, a 0-attack on~$G_i$ yields~$T[i,0]=0$.
  Furthermore,~$T[1,\ell']=P_{1,\ell'}$.
  For~$i>1$ and~$\ell'>0$, it holds that $T[i,\ell']=\max\set{T[i-1,\ell'-\ell'']+P_{i,\ell''}\mid 0\leq\ell''\leq\ell'}$.
  The optimal payoff is given by~$T[c,\ell]$.  Using dynamic
  programming, each entry of~$T$ can be computed in~$\Oh{\ell}$ time.
  Since~$T$ has~$\Oh{c\cdot \ell}$ entries, the dynamic program runs
  in~$\Oh{c\cdot \ell^2}$ total time.  By backtracking through the
  table, we can construct the optimal attack within the same time bound.
\end{proof}

\begin{theorem}\label{thm:interval-pureattfixdef}
  \Ppureattfixdef is polynomial-time solvable on interval graphs.
\end{theorem}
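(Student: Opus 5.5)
By \cref{lem:component-attack-knapsack} it suffices to handle a connected interval graph~$G$ with defense~$D$. I would compute, for every $0\le\ell'\le\ell$, an optimal $\ell'$-attack by dynamic programming along an interval representation. Fix a representation and order the vertices $v_1,\dots,v_n$ by left endpoint. The structural fact I would use is that for any attack~$A$, each connected component of $G-A$ is a contiguous block of intervals in this order. Between consecutive components there is a ``gap'' point of the line that is covered only by intervals of~$A$. So the attack cuts the line into segments at gap points. A segment is disabled exactly when it contains no vertex of $D\setminus A$. The attacker's payoff is $|A|$ plus the number of non-attacked vertices lying in defense-free segments.

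The DP runs left to right over the $O(n)$ candidate cut positions, namely the points between consecutive distinct endpoints. A cut at position~$x$ requires attacking every interval that contains~$x$. Between two consecutive cuts $x<y$, the non-attacked vertices are those intervals lying entirely strictly between $x$ and~$y$. Intervals that straddle $x$ or~$y$ are attacked anyway, and intervals straddling neither but containing both are also attacked. Attacking further vertices inside a segment only helps by splitting it, which is exactly the same as introducing another cut. So I may assume that every attacked vertex either contains some cut point or is attacked ``for free'', meaning it is a defended vertex whose removal does not change connectivity. That last case needs care: attacking a vertex of~$D$ removes a controller, and it can also disconnect the segment.

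With this in mind I would define $T[y,j]$ as the maximum number of disabled vertices among intervals entirely left of cut~$y$, over attacks that use $j$ vertices in that region and include all intervals containing~$y$. The transition from a previous cut~$x$ charges the intervals containing~$y$ but not~$x$. Within the segment $(x,y)$ there are no further cuts, so its remaining vertices form a connected set (or it is empty). That set is disabled iff every defended interval inside $(x,y)$ has been attacked. So the transition either pays to attack all defended vertices inside the segment, gaining the whole segment as disabled, or pays nothing extra and gains only the attacked ones. Attacking a defended vertex without disabling its segment is never useful beyond the $+1$ it contributes, and that is no better than attacking any other vertex. This gives $O(n^2)$ transitions with an $O(\ell)$ budget dimension, hence polynomial time. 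A dummy first and last cut handles the boundary, and backtracking recovers~$A$.

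The main obstacle is the justification that restricting attacks to ``all intervals through a set of cut points, plus possibly all controllers inside a chosen segment'' loses nothing. The subtle issue is interaction. An interval attacked because it contains cut~$y$ could also be a controller inside a neighbouring segment, and I must not double count or double charge it. Also, when a segment's controllers are attacked, some of them might themselves disconnect the segment, which can only help the attacker and should be captured by adding cuts there instead. I would first try an exchange argument. Take an optimal attack, let the cut points be the gaps between consecutive components of $G-A$, and show that every vertex of~$A$ either covers a gap or lies inside a disabled segment, where it may be assumed to be a controller (otherwise swapping it yields at least as good a payoff). Then the DP recurrence, with costs computed as counts of intervals in the appropriate positional classes, matches that normal form exactly.
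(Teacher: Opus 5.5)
Your cut-point dynamic program takes a genuinely different and leaner route than the paper's sweepline. The paper processes intervals one at a time with state $(x,y,z,j)$, where $y$ is the right end of the furthest surviving interval so far and $z$ is the left end of the earliest later interval that survives on its own; this takes $O(n^5)$ time. Your recurrence is sound: if all controllers strictly between two consecutive cuts are attacked, everything strictly between them is disabled, so the value it assigns never exceeds the true payoff.

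The gap is the normal-form lemma. You rightly call it the crux, but you only plan it, and the plan as stated fails. Suppose cut points are taken only at gaps between consecutive components of $G-A$. Then it is false that every attacked vertex covers such a gap or is a controller in a disabled segment, and no swap repairs this. Take $K_m$ with two controllers and $\ell=1$. Every $1$-attack leaves $G-A$ connected, so there is no gap at all, and the attacked vertex sits in a surviving segment contributing exactly $+1$. Your sentence that attacking inside a segment ``only helps by splitting it'' overlooks exactly this $+1$. Your DP does get this instance right, via a cut just right of the leftmost left endpoint, which separates nothing. But showing that such isolated $+1$ attacks can always be traded for cuts is precisely what the argument must establish. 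Relatedly, for an arbitrary pair of consecutive cuts $x<y$, the unattacked intervals between them need not be connected: an interior point may be covered only by intervals through $x$ or $y$. So completeness must come from a normal form in which segments really are connected.

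\textbf{A version that works.} Take as cuts \emph{all} candidate positions covered only by intervals of $A$, whether or not they separate two components. Two adjacent candidate positions that are both non-cuts always share an unattacked interval. Hence the unattacked part of each segment is connected, and every $v\in A$ covering no cut lies within the span of a single component $K$ of $G-A$.

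\begin{itemize}
\item If $K$ is disabled and $v\notin D$, drop $v$; the payoff is unchanged.
\item If $K$ survives, replace $v$ by the interval $w\in K$ with the leftmost left endpoint. The payoff does not decrease: $v$ is lost, $w$ is gained, and any piece of $K+v-w$ left without a controller is a bonus. The position just right of $w$'s left endpoint is now covered only by attacked intervals, so it becomes a cut covered by $w$. All old cuts remain cuts, since $v$ covered none.
\end{itemize}

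Each step strictly reduces the number of attacked vertices covering no cut. So you reach an optimal attack consisting of all intervals through the cuts plus all controllers inside the disabled segments, which is exactly what your recurrence enumerates.

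Alternatively, with the same choice of cuts, simply drop every such $v$ lying in a surviving component. Then output $\max_{j\le\ell}\min\{n,T_j+\ell-j\}$, where $T_j$ is your optimum using $j$ attacked vertices; this works because each spare attack on a surviving vertex gains at least one.

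With this lemma in place, your algorithm runs in $O(n^3+n^2\ell)$ time, better than the paper's bound.
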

\begin{proof}
  Let~$G$ be an interval graph, and let~$D$ be a $k$-defense. In the following, we show how to find an optimal $\ell$-attack against~$D$ in polynomial time.

  Given~$G$, we can compute, in linear time, an interval representation of~$G$.
  Let~\I be the corresponding set of intervals.
  We use a sweepline algorithm, going through~\I from left to right.
  For~$x \in \Real$, let~$\I_x$ be the set of intervals in~\I that contain~$x$, let~$\I_{\le x}$ be the set of intervals in~\I whose left endpoint is to the left of (or equal to)~$x$, and let~$\I_{>x}$ be the set of intervals in~\I whose left endpoint is to the right of~$x$.
  Let~$G_{\leq x}$ be the subgraph of~$G$ induced by the intervals of~$\I_{\le x}$. 
  We assume w.l.o.g.\ that no two intervals have the same endpoints (this can be achieved by a small perturbation of the endpoints), and that the intervals are closed on the left but open on the right.
  
  As usual, the movement of the sweepline is discretized at certain eventpoints and accompanied by a status.
  Our eventpoints are the endpoints of the intervals with an additional special eventpoint~$x^-$ to the left of all intervals, which serves as our starting point.
  Due to \cref{lem:component-attack-knapsack}, we can assume w.l.o.g.\ that~$G$ is connected. 
  Our status is a table~$T$.
  For $x \in \Real \cup \{x^-\}$, $y,z \in \Real \cup \{\none\}$, and integer $j\leq \ell$,
  we let $T[x,y,z,j]$ be the maximum number of disabled vertices in~$G_{\leq x}$ under defense~$D$ using an $j$-attack with the following properties:
  (i)~the furthest surviving interval in~$G_{\leq x}$ ends at~$y$ (or, if $y$ is $\none$, then no interval in~$G_{\leq x}$ survives) and
  (ii)~the earliest interval outside~$G_{\leq x}$ (i.e., from~$\mathcal{I}_{>x}$) that survives without being kept alive exclusively by a controller in~$G_{\leq x}$, starts at~$z$ (or, if~$z$ is $\none$, then no interval survives outside~$G_{\leq x}$ without being kept alive from inside~$G_{\leq x}$).
  Let $T[x,y,z,\ell]=-\infty$ if no $\ell$-attack with the above properties exists. 
  \cref{fig:interval-sweepline} shows an example.

  \begin{figure}[tb]
    \centering
    \includegraphics[page=5]{figures/clique-node-deletion-reduction.pdf}
    \caption{An example of the restrictions resulting from the parameters~$y$ and~$z$ in the sweepline status. This entry considers only attacks where the red intervals are disabled, whereas the green intervals survive. The black interval may survive or not.}
    \label{fig:interval-sweepline}
  \end{figure}

  Let~$x^+$ be the rightmost endpoint of any interval.
  Clearly,~$T[x^+, \none, \none, \ell]$ is the maximum number of vertices that an $\ell$-attack on~$G$ can disable under defense~$D$, and is thus our solution. 

  Before we show how to compute the entries of~$T$, consider the requirements for an entry $T[x,y,z,j]$ to be feasible:
  It must hold that~$y\in\set{v.\rightend\mid v\in \mathcal{I}_x}\cup\{\none\}$ and that~$z\in\set{v.\leftend\mid v\in \mathcal{I}_{>x}}\cup\{\none\}$.
  In the following, we deal with the case that there are no attacks left, i.e., $j=0$.
  In this case, either all intervals in~$G_{\leq x}$ survive or none do, as none of the intervals in~$G_{\leq x}$ can be attacked and~$G_{\leq x}$ is connected. 
  Thus, if~$y=\none$, no vertex in~$G_{\leq x}$ survives, and therefore it must hold that~$G_{\leq x}$ contains no controller under~$D$ and that~$z=\none$ or~$\max_{v\in\mathcal{I}_x}v.\rightend<z$.
  In this case, the attack receives all vertices in~$G_{\leq x}$ as payoff, and it thus holds that~$T[x,\none,z,0]=\norm{\mathcal{I}_{\leq x}}$.
  The other feasible case is that all vertices in~$G_{\leq x}$ survive, meaning that~$y=\max_{v\in \mathcal{I}_x}v.\rightend$.
  In this case, it must hold that~$G_{\leq x}$ contains a controller under~$D$ or that~$y>z$, and it holds that~$T[x,y,z,0]=0$.
  All other cases are infeasible.

  The base case is~$T[x^-,\none,z,j]=0$ for arbitrary~$z$ and~$j$.

\begin{algorithm}[hp]
    \caption{Pseudocode for computing $T[x,y,z,j]$ if~$x=v.\leftend$}
    \label{alg:interval-left-endpoint}
    \uIf{$j=0$}{
      \uIf{$y=\none$ \KwAnd $(\max_{v\in\mathcal{I}_{x}} v.\rightend<z$ \KwAnd $\mathcal{I}_{\leq x}\cap D=\emptyset)$}{
        \Return $\norm{\mathcal{I}_{\leq x}}$\;
      }
      \uElseIf{$y=\max_{v\in \mathcal{I}_x}v.\rightend$ \KwAnd $(\mathcal{I}_{\leq x}\cap D\neq\emptyset$ \KwOr $y>z)$}{
        \Return $0$\;
      }
      \Else{
        \Return $-\infty$\;
      }
    }
    \uElseIf{$y=\none$}{
      \uIf{$v\in D$ \KwOr $z<v.\rightend$}{
        \Return $T[x',\none,z,j-1]+1$\;
      }
      \Else{
        \Return $T[x',\none,z,j]+1$\;
      }
    }
    \uElseIf{$y<v.\rightend$}{
      \Return $T[x',y,z,j-1]+1$\;
    }
    \uElseIf{$y>v.\rightend$}{
      \uIf{$v\in D$}{
        \Return $\max\set{T[x',y,v.\leftend,j],T[x',y,z,j-1]+1}$\;
      }
      \Else{
        \Return $\max\set{T[x',y,z,j],T[x',y,z,j-1]+1}$\;
      }
    }
    \ElseIf{$y=v.\rightend$}{
      $z' \gets\begin{cases}
        v.\leftend & \text{if~$v\in D$ \KwOr $z<v.\rightend$}\\
        z & \text{otherwise}
      \end{cases}$\;
      \Return $\max_{y'\in\set{w\in\mathcal{I}_{x'}\mid w.\rightend<v.\rightend}\cup\{\none\}}T[x',y',z',j]$\;
    }
  \end{algorithm}

  To maintain the status, we first consider the case that the current eventpoint~$x$ is the left endpoint of an interval~$v$, i.e., $x=v.\leftend$; see \cref{alg:interval-left-endpoint} for a summary of this case.
  Let~$x'$ be the previous eventpoint. 
  If~$y=\none$, then no interval from~$\mathcal{I}_x$ survives. If~$v$ is a controller under~$D$, it must be attacked and it holds that~$T[x,\none,z,j]=T[x',\none,z,j-1]+1$. If~$v$ is not a controller, we have two cases: If~$z<v.\rightend$, then~$v$ must be attacked to ensure that no interval intersecting~$x$ survives and it holds that~$T[x,\none,z,j]=T[x',\none,z,j-1]+1$. Otherwise, if~$z>v.\rightend$ or~$z=\none$,~$v$ does not survive regardless of whether it is attacked or not, and it holds that~$T[x,\none,z,j]=T[x',\none,z,j]+1$.
  If~$y<v.\rightend$, then~$v$ cannot survive and must be attacked (as it is otherwise kept alive by the overlapping interval ending at~$y$), and it holds that~$T[x,y,z,j]=T[x',y,z,j-1]+1$.
  If~$y>v.\rightend$, then it is not specified whether~$v$ survives or not, and we can therefore choose whether to attack it. If~$v$ is a controller under~$D$, it holds that~$T[x,y,z,j]=\max\set{T[x',y,v.\leftend,j],T[x',y,z,j-1]+1}$. If~$v$ is not a controller, it holds that~$T[x,y,z,j]=\max\set{T[x',y,z,j],T[x',y,z,j-1]+1}$. 
  That leaves the case that~$y=v.\rightend$, meaning that~$v$ must survive. We have to consider all possible entries~$y'$ for the previous endpoint~$x'$, which can be any right endpoint of an interval in~$\mathcal{I}_{x'}$ that ends before~$v.\rightend$ or $\none$. Furthermore, we have to determine the entry~$z'$ for the earliest surviving interval outside~$G_{x'}$. If~$v$ is a controller under~$D$ or if~$z<v.\rightend$, then~$v$ is kept alive without~$G_{x'}$, and therefore~$z'=v.\leftend$. Otherwise, if~$v$ is not a controller and,~$z>v.\rightend$ or~$z=\none$,~$v$ cannot be kept alive without~$G_{x'}$, and therefore~$z'=z$. Thus, it holds that $T[x,v.\rightend,z,j]=\max_{y'\in\set{w.\rightend<v.\rightend\mid w\in\mathcal{I}_{x'}}\cup\{\none\}}T[x',y',z',j]$ where~$z'=v.\leftend$ if~$v$ is a controller under~$D$ or~$z<v.\rightend$, and~$z'=z$ otherwise.

  Next, we consider the case that the current eventpoint~$x$ is the right endpoint of an interval~$v$, i.e., $x=v.\rightend$.
  Let again~$x'$ be the previous eventpoint.
  Recall that we defined our intervals to be open on the right, meaning that~$v\notin \mathcal{I}_x$.
  Therefore, if~$y\neq \none$, it holds that~$T[x,y,z,j]=T[x',y,z,j]$ as this does not restrict the survival of~$v$.
  However, if~$y=\none$, we have to consider whether~$v$ survives or not.
  Thus, it holds that~$T[x,\none,z,j]=\max(T[x',v.\rightend,z,j], T[x',\none,z,j])$.

  It remains to show that this algorithm runs in time polynomial in $n=|V(G)|$.
  There are~$\Oh{n}$ eventpoints, and for each eventpoint there are~$\Oh{n}$ choices for~$y$,~$\Oh{n}$ choices for~$z$, and~$\Oh{n}$ options for~$j$, leading to a total of~$\Oh{n^4}$ entries in the status.
  Each entry can be computed in linear time.
  This leads to a total runtime of~$\Oh{n^5}$ for a connected component.
  Using \cref{lem:component-attack-knapsack}, we can handle disconnected graphs in polynomial time as well.
\end{proof}

Now we turn to graphs of bounded treewidth.  Treewidth is a
fundamental graph parameter and expresses the tree-similarity of a
graph.  For a formal definition, see, for example
\cite{CyganFKLMPPS15ParameterizedAlgorithms}.

\begin{theorem}\label{thm:tw-pureattfixdef}
  Given a graph $G$, a tree decomposition $(T,\mathcal{B})$ of~$G$, a
  defense, and an attack size~$\ell$, \Ppureattfixdef can be solved
  in~$\Oh{9^{\tw}\cdot\tw^2\cdot\ell^2\cdot n}$ time, where~$n$ is the
  number of vertices of~$G$ and \tw is the width of $(T,\mathcal{B})$.
\end{theorem}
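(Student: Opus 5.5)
We plan a standard dynamic program over a nice tree decomposition. We first convert $(T,\mathcal{B})$ in linear time into a nice tree decomposition of the same width with $\Oh{\tw\cdot n}$ nodes (leaf, introduce, forget, join). We root it so that the root bag is empty. For a node $t$, let $V_t$ be the union of bags in the subtree of~$t$ and $G_t=G[V_t]$. The attacker's goal is to maximise the number of disabled vertices, so we guess, for each vertex of the current bag $B_t$, one of three states: \emph{attacked}, \emph{surviving}, or \emph{disabled-but-not-attacked}. This three-way labelling is where the base $9^{\tw}$ will come from, via pairs of labellings at join nodes.

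A partial solution at $t$ is an attack $A\subseteq V_t$ with $|A|=j\le\ell$ together with a labelling of $V_t\setminus A$ into surviving and disabled vertices. It must be locally consistent: no edge of $G_t-A$ joins a surviving vertex to a disabled one, every controller not in $A$ is surviving, and every surviving \emph{forgotten} vertex (in $V_t\setminus B_t$) lies in a component of $G_t-A$ containing a controller or a vertex of $B_t$. The last condition is the key point. A component of $G_t-A$ touching the bag may get its controller later, but a component with no bag vertex is closed forever. Its survival therefore has to be justified now.

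For a correct table, the state must also record, for the surviving bag vertices, which of them already lie in a component of $G_t-A$ containing a controller (a set $S\subseteq B_t$, closed under connectivity). Rather than storing the full connectivity partition, we exploit monotonicity: merging components only helps survival. So we store just the three-way labelling $f\colon B_t\to\{\mathrm{att},\mathrm{surv},\mathrm{dis}\}$, a flag per surviving bag vertex saying whether it is already \emph{certified} by a controller, and $j$. The table entry $T_t[f,\text{flags},j]$ is the maximum number of disabled forgotten vertices plus attacked vertices.

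We would run the transitions as follows.
\begin{itemize}
\item \emph{Introduce} $v$: check edge consistency between $v$ and its bag neighbours, set $v$ certified if $v\in D$, and propagate certification along surviving bag edges.
\item \emph{Forget} $v$: if $v$ is surviving and uncertified, we must ensure it still reaches the bag through surviving vertices. Here we expect the main obstacle. Dropping the partition loses exactly this information, so the flags alone may not suffice. The fallback is to let the flag mean ``certified or connected through forgotten surviving vertices to some still-uncertified bag vertex that will be certified later.'' We then reject only forgotten surviving vertices whose component is cut off from the bag. Making this sound without the partition is the delicate step. If it fails, we would store a partition of the surviving bag vertices and accept a worse base.
\item \emph{Join}: combine the two children's tables for identical labellings, take the OR of the flags, and re-close certification under bag adjacency. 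The attack counts combine by a knapsack-style convolution over $j_1+j_2=j$ in $\Oh{\ell^2}$, exactly as in \cref{lem:component-attack-knapsack}. Pairs of flag states contribute the $9^{\tw}$ factor, and closure costs $\tw^2$.
\end{itemize}

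Correctness follows by the usual induction on the subtree: each entry equals the optimum over consistent partial solutions. At the empty root bag, every surviving vertex is certified, and the answer is compared against $\alpha$ to decide the instance. The overall running time is $\Oh{9^{\tw}\tw^2\ell^2 n}$, as claimed.
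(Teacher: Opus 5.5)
There is a genuine gap, and it is the one you flag yourself: certifying survival. Whether an uncertified surviving vertex may be forgotten, or becomes certified at a join, depends on how the surviving bag vertices are connected through already-forgotten vertices. A three-way label plus one flag per bag vertex does not record this.

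Your join rule already fails on a small example. Take the path $u$--$x$--$w$--$c$ with $D=\{c\}$ and the empty attack. Use a join node with bag $\{u,w\}$ whose left subtree introduces and forgets $x$ and whose right subtree introduces and forgets $c$. Only the right child certifies $w$, and $u$ reaches $w$ only through the forgotten $x$. Since $uw\notin E(G)$, OR-ing the flags and closing under bag adjacency leaves $u$ uncertified. When $u$ is forgotten next (bag $\{w\}$), nothing in your state shows that it reaches the bag. Rejecting it discards the only consistent labelling of this attack, since $c$ forces all four vertices to be surviving. Accepting such vertices means the flags certify nothing.

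So the invariant ``each entry is the optimum over consistent partial solutions'' is not maintained. Your fallback of storing a partition costs a Bell-number factor, not $9^{\tw}$. Separately, at a join the attack sizes satisfy $j_1+j_2=j+|f^{-1}(\mathrm{att})|$ rather than $j_1+j_2=j$, and attacked bag vertices counted by both children must be subtracted once.

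The missing idea is that survival never has to be certified, because the attacker maximizes. Keep only your first two conditions: every controller outside $A$ is labelled surviving, and no edge of $G-A$ joins a surviving and a disabled vertex. Then the component in $G-A$ of any disabled-labelled vertex consists of disabled-labelled vertices only and contains no controller, so these vertices really are disabled. Hence the number of attacked plus disabled-labelled vertices is at most $\payoffatt(G,D,A)$, with equality for the true labelling. Spurious survival labels only lower the count and are discarded by the maximum.

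So you can drop the flags and the third condition altogether. The plain labelling $B_t\to\{\mathrm{att},\mathrm{surv},\mathrm{dis}\}$, with these two checks at introduce nodes and an $\Oh{\ell^2}$ convolution at join nodes, runs in $\Oh{3^{\tw}\cdot\tw^2\cdot\ell^2\cdot n}$ time, within the claimed bound.

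The paper avoids connectivity information by a different device. Rather than computing bottom-up which bag vertices are already certified, it indexes its table by a set $C\subseteq S_\mathrm{s}$ of bag vertices assumed to receive a live connection from outside $G_t$. These act as controllers inside $G_t$, and the table maximizes over compatible $C',C''$ at join nodes. The observation above is why guessed support of this kind is harmless.

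If you compare recurrences, two points in the paper's write-up are loose. The surviving/disabled edge check must be imposed on every surviving vertex, not only on those in $D\cup C$. Joins must subtract every bag vertex counted by both children, those in $S_\mathrm{d}$ as well as $S_\mathrm{a}$.
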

\begin{proof}
  We can assume w.l.o.g.~\cite[Lemma
  7.4]{CyganFKLMPPS15ParameterizedAlgorithms} that the given tree
  decomposition~$(T,\mathcal{B})$ of~$G$ is \emphb{nice}, that is, $T$
  is a rooted tree with the following node types:
  For each \emphb{leaf node}~$t$ of~$T$, the corresponding
  bag~$B_t$ in $\mathcal{B}$ is empty, i.e., $B_t=\emptyset$.  A
  node~$t$ of~$T$ is an \emphb{introduce node} if it has exactly one
  child~$t'$ and it holds that~$B_t=B_{t'}\cup\set{v}$ for some
  vertex~$v$ of~$G$. We say that $v$ is \emphb{introduced} at~$t$.  A
  node~$t$ of~$T$ is a \emphb{forget node} if it has exactly one
  child~$t'$ and it holds that~$B_t=B_{t'}\setminus\set{v}$ for some
  vertex~$v$ of~$G$. We say that $v$ is \emphb{forgotten} at~$t$.  A
  node~$t$ of~$T$ is a \emphb{join node} if it has exactly two
  children~$t'$ and~$t''$ and it holds that~$B_t=B_{t'}=B_{t''}$.
  The root~$r$ of~$T$ has $B_r=\emptyset$.

  We now present a dynamic program for finding an optimal
  $\ell$-attack~$A$ against a defense~$D$.
  
  For each node~$t$ of the tree decomposition, let $V_t$ be the union
  of the bags corresponding to the nodes in the subtree of~$T$ rooted
  at~$t$.  Let $G_t=G[V_t]$.  We  maintain a
  table~$P_t[\ell',S_\mathrm{a},S_\mathrm{s},S_\mathrm{d},C]$,
  where~$S_\mathrm{a} \, \dot\cup \, S_\mathrm{s} \, \dot\cup \,
  S_\mathrm{d}=B_t$ and~$C\subseteq S_\mathrm{s}$.  The value
  of~$P_t[\ell',S_\mathrm{a},S_\mathrm{s},S_\mathrm{d},C]$ is the
  maximum number of vertices in~$G_t$ that can be disabled by an
  $\ell'$-attack on~$G_t$ under defense~$D\cap V_t$ such that the
  following conditions hold:
  \begin{itemize}
    \item All vertices in~$S_\mathrm{a}$ are attacked.
    \item All vertices in~$S_\mathrm{d}$ are disabled.
    \item All vertices in~$S_\mathrm{s}$ survive.
  \end{itemize}
  The vertices in~$C$ represent vertices having a live connection from
  outside~$G_t$, and thus act as controllers with regard to attacks
  on~$G_t$.

  The base cases correspond to the leaf nodes of~$T$.  For each leaf
  node~$t$, $B_t=\emptyset$.  Hence, for
  each~$\ell' \in \{0,\dots,\ell\}$, there is only one
  entry~$P_t[\ell',\emptyset,\emptyset,\emptyset,\emptyset]$, which we
  set to~$0$.  If an entry is infeasible, we set it to~$-\infty$.  For
  convenience, we implicitly add infeasible entries for~$\ell'=-1$.

  \subparagraph*{Introduce}
  
  Let~$t$ be an introduce node with child~$t'$, and let~$v$ be the vertex introduced at~$t$. We compute the entries of~$P_t$ as follows:
  If~$v$ is attacked, i.e., $v\in S_\mathrm{a}$, it holds that~$P_t[\ell',S_\mathrm{a},S_\mathrm{s},S_\mathrm{d},C]=P_{t'}[\ell'-1,S_\mathrm{a}\setminus\set{v},S_\mathrm{s},S_\mathrm{d},C]+1$.
  If~$v$ is disabled, i.e., $v\in S_\mathrm{d}$, it must hold that no neighbor of~$v$ is surviving.
  In other words, $N(v)\cap S_\mathrm{s}=\emptyset$ and~$v$ is not a controller nor has a connection to the outside, i.e., $v\not\in D$ and~$v\not\in C$.
  In this case, $P_t[\ell,S_\mathrm{a},S_\mathrm{s},S_\mathrm{d},C]=P_{t'}[\ell,S_\mathrm{a},S_\mathrm{s},S_\mathrm{d}\setminus\set{v},C]+1$.
  Otherwise, the entry is infeasible.
  
  If~$v$ survives, i.e., $v\in S_\mathrm{s}$, we distinguish two cases.
  First, if~$v$ is a controller, i.e., $v\in D$, or if~$v\in C$ (i.e., $v$ has a live connection from outside~$G_t$), then~$v$ provides a live connection to the set of its neighbors in~$B_t$ that are not attacked.
  Call this set~$N'$ and note that~$N'=N(v)\cap B_t\setminus S_\mathrm{a}$.
  For the entry to be feasible, it must therefore hold that $N'\subset S_\mathrm{s}$.
  In this case, it holds that~$P_t[\ell,S_\mathrm{a},S_\mathrm{s},S_\mathrm{d},C]=P_{t'}[\ell,S_\mathrm{a},S_\mathrm{s}\setminus\set{v},S_\mathrm{d},(C\setminus\set{v})\cup N']$.
  Otherwise, if~$v$ is not a controller and has no live connection from the outside, $v$ must have a live connection inside~$G_t$.
  Thus at least one neighbor of~$v$ in~$B_t$ must survive, otherwise the entry is infeasible.
  If the entry is feasible, we have~$P_t[\ell,S_\mathrm{a},S_\mathrm{s},S_\mathrm{d},C]=P_{t'}[\ell,S_\mathrm{a},S_\mathrm{s}\setminus\set{v},S_\mathrm{d},C]$.

  \subparagraph*{Forget}
  Let~$t$ be a forget node with child~$t'$, and let~$v$ be the vertex forgotten at~$t$. We compute the entries of~$P_t$ as follows:
  Since we have forgotten~$v$, it can be in any of the three states (attacked/disabled/survived). Further, it cannot be in~$C$, as it has no neighbors outside of~$G_{t'}$. Thus, it holds that
  \begin{align*}
    P_t[\ell,S_\mathrm{a},S_\mathrm{s},S_\mathrm{d},C] = \max \big\{
    & P_{t'}[\ell,S_\mathrm{a} \cup \set{v},S_\mathrm{s},S_\mathrm{d},C], ~
      P_{t'}[\ell,S_\mathrm{a},S_\mathrm{s},S_\mathrm{d}\cup\set{v},C], \\
    & P_{t'}[\ell,S_\mathrm{a},S_\mathrm{s}\cup\set{v},S_\mathrm{d},C] \big\}.
  \end{align*}
  
  \subparagraph*{Join}

  Let~$t$ be a join node with children~$t'$ and~$t''$.  As we join
  the two subgraphs~$G_{t'}$ and~$G_{t''}$, the number of combined
  disabled vertices is the sum of the disabled vertices in both
  subgraphs, minus the vertices in~$S_\mathrm{a}$, which are attacked
  in both subgraphs and thus counted twice.  Further, we have to
  consider that some vertices in~$S_\mathrm{s}$ in~$G_{t'}$ may
  have a live connection that goes through~$G_{t''}$, and vice versa.
  We account for these possibilities by maximizing over all~$C'$
  and~$C''$ such that~$C\subseteq C',C''\subseteq S_\mathrm{s}$
  and~$(C' \setminus C)\cap (C'' \setminus C)=\emptyset$.  Therefore,
  it holds that
  \begin{align*}
    \hspace*{-2ex}P_t[\ell,S_\mathrm{a},S_\mathrm{s},S_\mathrm{d},C] = \max
    \big\{&
        P_{t'}[\ell',S_\mathrm{a},S_\mathrm{s},S_\mathrm{d},C'] +
        P_{t''}[\ell'',S_\mathrm{a},S_\mathrm{s}, S_\mathrm{d},C''] -
        \norm{S_\mathrm{a}} : \\
    & \ell'+\ell''=\ell+\norm{S_\mathrm{a}},
      C\subseteq C', C''\subseteq S_\mathrm{s},
      (C'\setminus C)\cap (C''\setminus C)=\emptyset \big\}.
  \end{align*}

  \subparagraph*{Runtime}

  There are~$\Oh{3^{\tw}\cdot 2^{\tw}\cdot \ell}$ entries
  (as~$S_\mathrm{a},S_\mathrm{s},S_\mathrm{d}$ partition~$B_t$)
  and~$C\subseteq S_\mathrm{s}\subseteq B_t$.  Computing an entry of a
  introduce or forget node takes~$\Oh{\tw}$ time, whereas computing an
  entry of a join node takes~$\Oh{3^{\tw}\cdot \ell}$ time.  Since
  there are~$\Oh{\tw \cdot n}$ nodes in the nice tree decomposition,
  we obtain the desired runtime.
\end{proof}

\section{Open Problems}

From our last two theorems, it follows that \Ppuredefpureatt is in \NP for interval graphs and graphs of bounded treewidth.
While it is tempting to use similar dynamic programs to solve the more general case of \Ppuredefpureatt in polynomial time, by additionally deciding for each vertex whether it is defended, there is a fundamental issue with this approach: In \Ppuredefpureatt, the defender has to determine its whole defense before the attacker attacks. However, a dynamic program that iteratively proceeds along a graph allows the defender to adapt its defense to the attack decisions made in the previous steps, something not possible in the actual problem. 
Still, we conjecture that \Ppuredefpureatt is in \POL for interval graphs and graphs of bounded treewidth.

Another natural question is whether any of the problems that are hard
on general graphs admit polynomial-time solutions on other graph
classes, e.g., planar graphs.

\bibliography{paper}
\end{document}

%%% Local Variables:
%%% mode: LaTeX
%%% TeX-master: t
%%% End: